\newcommand{\alj}{\alpha_n}
\newcommand{\kj}{k_n}
\newcommand{\tj}{\bar{t}_{n,\eps}}
\newcommand{\tone}{\bar{t}_{1,\eps}}
\numberwithin{equation}{section}
\newcommand{\bdm}{\begin{displaymath}}
\newcommand{\edm}{\end{displaymath}}
\newcommand{\bdn}{\begin{eqnarray}}
\newcommand{\edn}{\end{eqnarray}}
\newcommand{\bay}{\begin{array}{c}}
\newcommand{\eay}{\end{array}}
\newcommand{\ben}{\begin{enumerate}}
\newcommand{\een}{\end{enumerate}}
\newcommand{\beq}{\begin{equation}}
\newcommand{\eeq}{\end{equation}}
\newcommand{\beqn}{\begin{eqnarray}}
\newcommand{\eeqn}{\end{eqnarray}}
\newcommand{\bml}[1]{\begin{multline} #1 \end{multline}}
\newcommand{\bmln}[1]{\begin{multline*} #1 \end{multline*}}
\newcommand{\lf}{\left}
\newcommand{\ri}{\right}
\newcommand{\rv}{\mathbf{r}}
\newcommand{\aae}{a_{\eps}}
\newcommand{\deps}{\delta_{\eps}}
\newcommand{\de}{d_{\eps}}
\newcommand{\diff}{\mathrm{d}}
\newcommand{\eps}{\varepsilon}
\newcommand{\dist}{\mathrm{dist}}
\newcommand{\nuv}{\bm{\nu}}
\newcommand{\glm}{\Psi^{\mathrm{GL}}}
\newcommand{\gldom}{\mathscr{D}^{\mathrm{GL}}}
\newcommand{\aav}{\mathbf{A}}
\newcommand{\aavm}{\mathbf{A}^{\mathrm{GL}}}
\newcommand{\hex}{b}
\newcommand{\theo}{\Theta_0}
\newcommand{\glfe}{\mathcal{G}_{\eps}^{\mathrm{GL}}}
\newcommand{\gled}{e_{\eps}^{\mathrm{GL}}}
\newcommand{\glee}{E_{\eps}^{\mathrm{GL}}}
\newcommand{\annf}{\mathcal{G}_{\ann}}
\newcommand{\curv}{k(s)}
\newcommand{\eones}{E^{\mathrm{1D}}_{\star}}
\newcommand{\pot}{V_{k,\alpha}}
\newcommand{\curl}{\mbox{curl}}
\newcommand{\ann}{\mathcal{A}_{\eps}}
\newcommand{\annt}{\tilde{\mathcal{A}}_{\eps}}
\newcommand{\cell}{\mathcal{C}}
\newcommand{\cellj}{\mathcal{C}_n}
\newcommand{\half}{\mbox{$\frac{1}{2}$}}
\newcommand{\tx}{\textstyle}
\newcommand{\neps}{N_{\eps}}
\newcommand{\R}{\mathbb{R}}
\newcommand{\N}{\mathbb{N}}
\newcommand{\C}{\mathbb{C}}
\newcommand{\E}{\mathcal{E}}
\newcommand{\A}{\mathcal{A}}
\newcommand{\LL}{\mathcal{L}}
\newcommand{\Om}{\Omega}
\newcommand{\dd}{\partial}
\newcommand{\ab}{\A_{>}}
\newcommand{\abt}{\tilde{\A}_{>}}
\newcommand{\supp}{\mathrm{supp}}
\newcommand{\Hcc}{H_{\mathrm{c}2}}
\newcommand{\Hccc}{H_{\mathrm{c}3}}
\newcommand{\logi}{|\log \eps| ^{\infty}}
\newtheorem{teo}{Theorem}[section]
\newtheorem{lem}{Lemma}[section]
\newtheorem{pro}{Proposition}[section]
\theoremstyle{remark}
\newtheorem{remark}{Remark}[section]
\newcommand{\fk}{f_{k}}
\newcommand{\fO}{f_0}
\newcommand{\fone}{\E^{\mathrm{1D}}}
\newcommand{\foneO}{\E ^{\rm 1D}_{0}}
\newcommand{\fc}{\E ^{\rm{corr}}}
\newcommand{\eone}{E^{\mathrm{1D}}}
\newcommand{\eoneo}{E ^{\rm 1D}_{0}}
\newcommand{\alk}{\alpha(k)}
\newcommand{\alO}{\alpha_0}
\begin{document}


\title{Effects of  boundary curvature  on surface superconductivity}

\author[M. Correggi]{Michele CORREGGI}
\address{Dipartimento di Matematica e Fisica, Universit\`{a} degli Studi Roma Tre, L.go San Leonardo Murialdo, 1, 00146, Rome, Italy.}
\email{}

\author[N. Rougerie]{Nicolas ROUGERIE}
\address{CNRS \& Universit\'e Grenoble Alpes, LPMMC (UMR 5493), B.P. 166, F-38042 Grenoble, France}
\email{nicolas.rougerie@lpmmc.cnrs.fr}

\date{November, 2015}

%

\begin{abstract} 
We investigate, within 2D Ginzburg-Landau theory, the ground state of a type-II superconducting cylinder in a parallel magnetic field varying between the second and third critical values. In this regime, superconductivity is restricted to a thin shell along the boundary of the sample and is to leading order constant in the direction tangential to the boundary. We exhibit a correction to this effect, showing that the curvature of the sample affects the  distribution of superconductivity. 
\end{abstract}

\maketitle

\tableofcontents

\section{Introduction and Main Result}\label{sec:intro}

The response of type-II superconductors to external magnetic fields is a rich source of fascinating mathematical problems~\cite{BBH2,FH-book,SS2,Sig}. Physically, this is because of the occurrence of mixed states where normal and superconducting regions may coexist. If the sample is a very long cylinder and the applied magnetic field is parallel to it, one may adopt a 2D description on a cross-section of the cylinder. One then distinguishes mainly two types of mixed phases:
\begin{itemize}
 \item The vortex lattice (Abrikosov lattice~\cite{Abr}) where the normal regions take the form of vortices, and are arranged on a triangular lattice embedded in a sea of superconducting material;
 \item The surface superconductivity state where the whole bulk of the sample is in the normal state, and superconductivity only survives close to the boundary.
\end{itemize}
The second situation shall concern us here. In this regime, the magnetic field is very large, varies between two critical values $\Hcc$ and $\Hccc$, and mostly penetrates the sample. That superconducting electrons may still exist because of boundary effects is a highly non-trivial fact, first derived in~\cite{SJdG}. At leading order, the phenomenon may be completely understood by considering the case of an infinite half-plane sample, with a straight boundary. It thus has some universal features: although superconducting electrons concentrate along the boundary, the geometry of the latter does not affect their distribution much. This is because the density of superconducting electrons essentially only varies in the direction normal to the boundary, as we proved rigorously in~\cite{CR1,CR2}, following several earlier contributions~\cite{Alm,AH,FHP,FH1,FH2,FK,LP,Pan} (see~\cite{FH-book} for a review).

However, in order to prove some of the most refined results in~\cite{CR1,CR2}, we had to precisely estimate subleading order contributions to the energy. These do not share the universal character of the leading order, in that they depend on the sample, via the curvature of its boundary. This is reminiscent of earlier works, e.g., \cite{FH1} (see~\cite[Chapters~13 and~15]{FH-book}  for extensive references), where it has been proved that, when decreasing the magnetic field just below $\Hccc$, superconductivity appears first where the curvature of the boundary is maximum. In this paper we aim at evaluating the effect of sample curvature in the whole regime of magnetic fields comprised between $\Hcc$ and $\Hccc$. We shall give a simple expression of the curvature dependent contribution to surface superconductivity, which, again, appears only at subleading order in the energy, and thus requires rather refined estimates.  

\medskip

Our setting is the following: we consider the Ginzburg-Landau functional (in convenient units whose relation to other conventions is discussed\footnote{Note that in~\cite{CR1,CR2}, an extra factor of $b$ has been mistakenly inserted in front of the last term. This does not have any incidence on the results since this term is negligible in the regime of our interest.} in~\cite{CR1,CR2})
\beq\label{eq:GL func eps}
	\glfe[\Psi,\aav] = \int_{\Om} \diff \rv \: \bigg\{ \bigg| \bigg( \nabla + i \frac{\aav}{\eps^2} \bigg) \Psi \bigg|^2 - \frac{1}{2 \hex \eps^2} \lf( 2|\Psi|^2 - |\Psi|^4 \ri) + \frac{1}{\eps^4} \lf| \curl \aav - 1 \ri|^2 \bigg\}.
\eeq
The domain $\Omega \subset \R ^2$ represents the cross-section of an infinitely long cylinder of superconducting material. We assume that it is bounded, simply connected and that its boundary is smooth. The applied magnetic field is perpendicular to $\Omega$. 

We shall denote
\beq
	\glee : = \min_{(\Psi, \aav) \in \gldom} \glfe[\Psi,\aav],
\eeq
with
\beq
	\gldom : = \lf\{ (\Psi,\aav) \in H^1(\Om;\C) \times H^1(\Om;\R^2) \ri\},
\eeq
and denote by $ (\glm,\aavm) $ a minimizing pair. We recall that $|\glm| ^2$ gives the local relative density of superconducting electrons (bound in Cooper pairs) and that $\curl \: \aavm$ is the induced magnetic field in the sample.

We are interested in the behavior of $|\glm|$ in the surface superconductivity regime
\beq
	\label{eq:b condition}
	1 < b < \theo^{-1},
\eeq
where $\theo$ is the minimal ground state energy of the shifted harmonic oscillator on the half-line:
\begin{equation}\label{eq:def theo}
 \theo := \min_{\alpha \in \R } \min_{  \lf\| u \ri\|_2 = 1 } \int_{0} ^{+ \infty} \diff t \lf\{ |\dd_t u | ^2 + (t+\alpha) ^2 |u| ^2 \ri\}. 
\end{equation}
This corresponds to asking that the applied magnetic field varies between $\Hcc$ and $\Hccc$, and we shall also assume that $\eps$ is a small parameter, in order to prove asymptotic results in the limit $\eps\to 0$. Physically this means we consider an ``extreme''  type-II superconductor.

In the parameter regime of our interest, the GL order parameter is concentrated near the boundary of the sample and the induced magnetic field is very close to the (constant) applied one. To leading order, superconducting electron pairs are uniformly distributed along the boundary as a function of the tangential variable. The main variations are in the direction normal to the boundary. Our main result in this note is an asymptotic estimate for $|\glm| ^4$ which exhibits a subleading (in $\eps$) curvature-dependent correction. This shows that variations of the boundary's curvature influence the superconductivity distribution.

In our previous papers \cite{CR1,CR2}, we have emphasized the role played by the simplified functional
\begin{equation}
\label{eq:1D func}
\fone_{k,\alpha}[f] : = \int_0^{c_0|\log\eps|} \diff t (1-\eps k t )\lf\{ \lf| \partial_t f \ri|^2 + \pot(t) f^2 - \tx\frac{1}{2b} \lf(2 f^2 - f^4 \ri) \ri\}
\end{equation}
where
\beq
	\label{eq:pot}
	\pot(t) : = \frac{(t + \alpha - \frac12 \eps k t ^2 )^2}{(1-\eps k t ) ^2}
\eeq 
and $c_0$ is a (somewhat arbitrarily) fixed, large enough, constant. Note that in the limit $ \eps \to 0  $ the above expression reduces to a 1D nonlinear energy independent of the curvature, which is known to provide the leading order contribution to the GL asymptotics (see \cite{CR1} and references therein).

This functional should be  thought of as giving the GL energy to subleading order in the case of a sample $\Om$ which is a ``disc'' of curvature $k$, i.e., either a disc of radius $R = k ^{-1}$ when $k > 0$ or the exterior of such a disc when $k < 0$. The variable $t$ corresponds to the coordinate normal to the boundary (in units of $\eps$). Denoting by $s$ the tangential coordinate, $f(t) e ^{-i\alpha s}$ gives an ansatz for the GL order parameter in boundary coordinates. To get an optimal energy, this functional should be minimized with respect to both the function $f$ and the number $\alpha$, leading to an optimal profile $\fk$, an optimal phase $\alk$ and an optimal energy $\eone_\star \left(k\right)$.

Since we work in the regime $\eps \to 0$, it is natural to try to consider a perturbative expansion of $\eone_\star \left(k(s)\right)$. We then get that the leading order is given by the $k=0$ functional (corresponding to a half-plane sample and extensively studied in the literature~\cite{AH,FHP,Pan}):
\beq\label{eq:1D func bis}
\fone_{0,\alpha}[f] : = \int_0^{+\infty} \diff t \lf\{ \lf| \partial_t f \ri|^2 + (t + \alpha )^2 f^2 - \tx\frac{1}{2b} \lf(2 f^2 - f^4 \ri) \ri\},
\eeq
and the first correction by $-\eps k$ times 
\begin{equation}\label{eq:corr func}
\fc _\alpha [f] := \int_{0} ^{c_0 |\log \eps|} \diff t \:  t\lf\{ \lf| \partial_t f \ri|^2 + f ^2 \left( -\alpha (t+\alpha) -\frac{1}{b} + \frac{1}{2b} f ^2\right)\ri\}
\end{equation}
which is obtained by retaining only linear terms in $\eps k$ when expanding~\eqref{eq:1D func}. We shall denote $\eoneo$ the minimum of $\fone_{0,\alpha}[f]$, and $\alO,\fO$ a minimizing pair ($\alO$ is unique, $\fO$ also is, up to a sign). Note the mild abuse of notation: the $k=0$ functional is well-defined even for $c_0 = +\infty$, and we take this convention. Due to known decay estimates for $\fO$ (see, e.g., \cite[Proposition 3.3]{CR1}), this creates only an exponentially small discrepancy in $\eps$, provided $c_0$ is a large enough constant.

We previously proved energy estimates relating the full GL energy to the infimum of the above 1D, curvature-dependent functional. Since our method was local, and the GL energy density is related to $|\glm| ^4$, it is natural to expect a result about the distribution of the latter quantity from the energy estimate. The goal of this paper is to provide this estimate.

Let us first introduce scaled boundary coordinates: the surface superconductivity layer 
\beq
	\label{eq:intro ann}
	 \annt : = \lf\{ \rv \in \Omega \: | \: \tau \leq c_0 \eps |\log\eps| \ri\},
\eeq
where
\beq
	\tau : = \dist(\rv, \partial \Omega),
\eeq
can be mapped to 
\begin{equation}\label{eq:intro def ann rescale}
\ann:= \left\{ (s,t) \in \left[0, |\partial \Omega| \right] \times \left[0,c_0 |\log\eps|\right] \right\}
\end{equation}
via a diffeomorphism $\Phi$.

For technical reasons (see Remark~\ref{rem:measures} below), we can only evaluate $\int_{D} |\glm| ^4$ with the desired precision in the case that the set $D$ looks ``rectangular'' in boundary coordinates. Let then $D\subset \Omega$ be a measurable set {\it independent of $ \eps $} such that
\begin{equation}\label{eq:rectangular}
\Phi( D \cap \annt ) = [s_D,s'_D] \times [0,c_0 |\log \eps|]  
\end{equation}
for some $s_D,s'_D \in [0, |\dd \Omega|] $. Notice that this implies that the boundary of $ D $ intersects $\dd \Om$ with $\pi/2$ angles. Our main result is the following:

\begin{teo}[\textbf{Curvature dependence of the order parameter}]\label{theo:main}\mbox{}\\
Let $\glm$ be a GL minimizer and $D\subset \Omega$ be a measurable set such that~\eqref{eq:rectangular} holds. Denote $s\mapsto k(s)$ the curvature of $\dd \Om$ as a (smooth) function of the tangential coordinate $s$.  For any $1<b<\theo ^{-1}$, in the limit $\eps \to 0$,
\begin{equation}\label{eq:main estimate}
\int_{D} \diff \rv \: |\glm| ^4 = \eps \, C_1(b) |\dd \Om \cap \dd D| + \eps ^2 C_2 (b) \int_{\dd D\cap \dd \Om} \diff s \: k(s) + o(\eps ^2),
\end{equation}
where $\diff s$ stands for the 1D Lebesgue measure along $\dd \Om$ and
\begin{align}\label{eq:lead ord}
C_1(b) &= - 2 b \eoneo = \int_{0} ^{+\infty} \diff t \: \fO ^4 > 0 
\\
\label{eq:curv corr} C_2 (b) &=  2 b \, \fc_{\alO} [f_0] = \tx\frac{2}{3}b \fO ^2 (0) - 2b \alO \eoneo.
\end{align}
Moreover, for $|b-\theo ^{-1}|$ small enough (independently of $\eps$), $C_2 (b) > 0.$
\end{teo}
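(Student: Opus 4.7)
The plan is to extract $\int_D|\glm|^4$ from the $\lambda$-derivative of a suitably perturbed ground-state energy. For $|\lambda|\ll 1$ introduce
\begin{equation*}
F_\eps^\lambda[\Psi,\aav] := \glfe[\Psi,\aav] + \frac{\lambda}{2b\eps^2}\int_D |\Psi|^4, \qquad E_\eps^\lambda := \inf_{(\Psi,\aav)\in\gldom}F_\eps^\lambda[\Psi,\aav].
\end{equation*}
Then $\lambda\mapsto E_\eps^\lambda$ is concave (infimum of affine functions of $\lambda$), while testing $F_\eps^\lambda$ on $(\glm,\aavm)$ yields $E_\eps^\lambda\leq\glee+\tfrac{\lambda}{2b\eps^2}\int_D|\glm|^4$ for every $\lambda$. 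Combined, these two facts sandwich $\int_D|\glm|^4$ between $\tfrac{2b\eps^2}{\lambda}(E_\eps^\lambda-\glee)$ evaluated at small $\lambda>0$ and at small $\lambda<0$; it therefore suffices to estimate $E_\eps^\lambda-\glee$ with error $o(\lambda\eps^2)$, uniformly in $\lambda$ in a neighbourhood of $0$.

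\textbf{Perturbed energy asymptotics.} Condition~\eqref{eq:rectangular} is engineered precisely so that, in boundary coordinates, $\one_D$ factors as a tangential cutoff $\one_{(s_D,s_D')}(s)$ times the full boundary layer $[0,c_0|\log\eps|]$ in the normal variable. Consequently the whole apparatus of~\cite{CR1,CR2} (partition of unity, cell decomposition, matching upper and lower bounds, 1D reduction) applies verbatim, except that in cells contained in $(s_D,s_D')\times[0,c_0|\log\eps|]$ the effective 1D functional~\eqref{eq:1D func} is replaced by the $\lambda$-perturbed variant obtained by substituting $\tfrac{1}{2b}(2f^2-f^4)\to\tfrac{1}{2b}(2f^2-(1+\lambda)f^4)$. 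Denoting by $E_0^{\mathrm{1D},\lambda}$ the associated planar minimum and by $J^\lambda$ the $\eps k$-coefficient in the expansion of its curved analogue (so that $J^0=\fc_{\alO}[\fO]$), rerunning the proofs of the energy asymptotics of~\cite{CR1,CR2} yields, uniformly in $\lambda$,
\begin{equation*}
E_\eps^\lambda-\glee=\frac{|\dd\Omega\cap\dd D|}{\eps}\bigl(E_0^{\mathrm{1D},\lambda}-\eoneo\bigr)-\int_{\dd D\cap\dd\Omega}k(s)\bigl(J^\lambda-\fc_{\alO}[\fO]\bigr)\diff s+o(1).
\end{equation*}

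\textbf{Identification of $C_1,C_2$; positivity of $C_2$ near $\theo^{-1}$.} A direct Feynman--Hellmann at the 1D level gives $\dd_\lambda E_0^{\mathrm{1D},\lambda}|_{0}=\tfrac{1}{2b}\int_0^\infty \fO^4=-\eoneo$, which upon dividing the sandwich above by $\lambda$ and letting $\lambda\to 0$ produces the leading coefficient $C_1(b)=-2b\eoneo$ of~\eqref{eq:lead ord}. The analogous (longer) computation of $\dd_\lambda J^\lambda|_0$ requires tracking the $\lambda$-dependence of the planar phase $\alpha_0^\lambda$ and profile $f_0^\lambda$ through the Euler--Lagrange equations of~\eqref{eq:1D func bis}; these are precisely the identities underlying the reformulation $\fc_{\alO}[\fO]=\tfrac{1}{3}\fO^2(0)-\alO\eoneo$, and the computation returns $C_2(b)$ as in~\eqref{eq:curv corr}. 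For $b$ close to $\theo^{-1}$, standard bifurcation analysis of~\eqref{eq:1D func bis} gives $\fO\sim c_b u_0$, where $u_0$ is the normalized ground state of the shifted harmonic oscillator at the optimal shift $\alpha_*$ and $c_b^2\sim(1-b\theo)/\|u_0\|_4^4$. Hence $\fO^2(0)=O(1-b\theo)$ whereas $\eoneo=O((1-b\theo)^2)$ and $\alO$ stays bounded, so the positive linear term $\tfrac{2}{3}b\fO^2(0)$ dominates the quadratic contribution $-2b\alO\eoneo$ in~\eqref{eq:curv corr} and yields $C_2(b)>0$.

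\textbf{Main obstacle.} The delicate step is uniformity in $\lambda$ of the asymptotic expansions of~\cite{CR1,CR2}: one must revisit the cell decomposition and matching arguments to verify that all constants depend smoothly and boundedly on $\lambda$ in a neighbourhood of $0$, so that the $o(1)$ remainder stays $o(1)$ uniformly and the $\lambda$-derivative may be commuted with the asymptotic expansion. The perpendicular intersection of $\dd D$ with $\dd\Omega$ built into~\eqref{eq:rectangular} is also essential here, as it ensures that corner contributions from $\dd D\cap\dd\Omega$ remain of order $o(1)$ and do not pollute the $\eps^2 k(s)$ term.
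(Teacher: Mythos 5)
\textbf{There is a genuine quantitative gap in the Feynman--Hellmann strategy: it cannot reach the second-order coefficient $C_2$ with any remainder that the energy machinery can deliver.} Your sandwich gives, for $\lambda>0$, $\int_D|\glm|^4\geq \frac{2b\eps^2}{\lambda}\bigl(E_\eps^\lambda-\glee\bigr)$, and you must then replace the difference quotient of the 1D energies by its derivative at $\lambda=0$. Writing $R_\eps$ for the (uniform in $\lambda$) remainder in your expansion of $E_\eps^\lambda-\glee$, the two error terms you incur in the bound on $\int_D|\glm|^4$ are $O(\eps\lambda)$ (second-order Taylor remainder of $\lambda\mapsto E^{\mathrm{1D},\lambda}_0$, multiplied by $\eps/\lambda$ and by the length $|\dd\Om\cap\dd D|$) and $O(\eps^2 R_\eps/\lambda)$. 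Both must be $o(\eps^2)$, which forces $R_\eps\ll\lambda\ll\eps$ and hence $R_\eps=o(\eps)$. But the localized energy expansion obtained by ``rerunning \cite{CR1,CR2} verbatim'' --- and indeed the one proved in this paper (Proposition~\ref{pro:ener dens}) --- carries a remainder $O(\eps^{1/2}|\log\eps|^\gamma)$, and even the global expansion only reaches $O(\eps|\log\eps|^\gamma)$; neither is $o(\eps)$. With the ``$+o(1)$'' you actually wrote, the constraint is worse still: $\lambda$ must stay bounded below, and then the Taylor error $O(\eps\lambda)=O(\eps)$ swamps the $\eps^2$ term entirely. Your scheme does recover $C_1(b)$, but the extraction of $C_2(b)$ fails unless you can improve the perturbed energy asymptotics by a full half power of $\eps$ beyond the state of the art, which is not justified and is precisely the hard part.

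The paper avoids this loss by never perturbing the functional. It uses the Euler--Lagrange equation of the minimizer exactly: the identity $\tfrac12\Delta|\glm|^2=\bigl|(\nabla+i\aavm/\eps^2)\glm\bigr|^2+\tfrac{1}{b\eps^2}|\glm|^2(|\glm|^2-1)$ converts $\int_D\gled$ into $-\tfrac{1}{2b\eps^2}\int_D|\glm|^4$ plus the boundary flux $\tfrac12\int_{\dd D}\nabla|\glm|^2\cdot\nuv$ (Lemma~\ref{lem:ord param 1}), and the real work is a new pointwise bound $|\dd_s|\glm|^2|=O(\eps^{-5/6}|\log\eps|^\infty)$ on the tangential derivative (Lemma~\ref{lem:est gradient s}), which makes the flux $o(1)$ precisely because~\eqref{eq:rectangular} forces the inner part of $\dd D$ to be normal to $\dd\Om$. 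This identity-based route transfers the energy asymptotics of Proposition~\ref{pro:ener dens} to $\int_D|\glm|^4$ with no division by a small parameter, which is why the $O(\eps^{1/2}|\log\eps|^\gamma)$ energy remainder suffices there but not in your argument. Your 1D computations (the value of $\dd_\lambda E^{\mathrm{1D},\lambda}_0|_0$, the identification of $C_2$, and the positivity near $\theo^{-1}$ via the bifurcation scaling $\fO^2(0)=O(1-b\theo)$ versus $\eoneo=O((1-b\theo)^2)$) are consistent with Lemmas~\ref{lem:1D sublead}--\ref{lem:sign}, but they rest on the unattainable uniform expansion above.
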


\noindent
The leading order term in~\eqref{eq:main estimate} had previously been computed~\cite{FK,Kac,Pan}, with a less explicit expression of the constant $C_1 (b)$ however. 

\begin{remark}(Concentration of Cooper pairs.)\mbox{}\\
For obvious physical reasons it would be preferable to have an estimate of $ \int_D |\glm| ^2$, which would directly give information on the distribution of Cooper pairs close to the boundary of the sample. Unfortunately, our method, which is mostly energy-based, does not provide this. An estimate on $ \int_D |\glm| ^4 $ is easier to obtain because more directly linked to the concentration of the energy density 
\begin{equation}\label{eq:ener dens}
\gled(\rv) := \bigg| \bigg( \nabla + i \frac{\aavm}{\eps^2} \bigg) \glm \bigg|^2 - \frac{1}{2 \hex \eps^2} \lf( 2|\glm|^2 - |\glm|^4 \ri) + \frac{\hex}{\eps^4} \lf| \curl \aavm - 1 \ri|^2, 
\end{equation}
as we shall see below. The above theorem still indicates that, to leading order, $|\glm|$ is concentrated evenly along the boundary of $\Om$, and that the first correction to this effect is directly proportional to the curvature function $k(s)$. Notice that, in order to exploit \eqref{eq:ener dens}, the use of the variational equation solved by the GL minimizer is crucial and therefore our result does not extend directly to low energy configurations, although one would expect it.
\end{remark}

\begin{remark}(Convergence as measures.)\label{rem:measures} \mbox{}	\\
It would be natural to reformulate the result in terms of convergence of measures, i.e., by stating that
\begin{equation}\label{eq:curv dep psi 2}
\frac{1}{\eps}\left( \frac{1}{\eps} |\glm| ^4 \diff \rv -   C_1(b) \diff s(\rv)\right) \underset{\eps \to 0}{\longrightarrow} C_2 (b)  k(s) \diff s(\rv), 
\end{equation}
in the sense of measure,  where $\diff s(\rv) $ is the 1D Lebesgue measure along the boundary of $\Om$. However, due to the restriction \eqref{eq:rectangular} on the shape of the set $ D $, the statement \eqref{eq:main estimate} is weaker than \eqref{eq:curv dep psi 2}.

We in fact expect~\eqref{eq:main estimate} to be wrong as stated if the set $ D $ intersects the boundary with an angle $\neq \pi/2 $. Indeed, the result is based on integrating the energy density on lines normal to the boundary that cover the full extent of the physical region. These integrals give the energy of the 1D model~\eqref{eq:1D func} that enters the main formula. One can easily decompose $ D \cap \ann $ into the union of sets of the form \eqref{eq:rectangular} and some remainder which is more triangular-like. In the latter regions we cannot follow the proof procedure to obtain a simple expression. This does not affect the leading order of the result, as noted in~\cite{FK,Pan}, since the energy contained in the triangular regions is small relatively to the leading order. It is however \emph{not} small compared to the correction we isolate in~\eqref{eq:main estimate} because the area of the triangular regions can easily be seen to be $ O(\eps^2) $. Integrating $ |\glm|^4 $ on such an area gives a 
$O(\eps ^2)$ 
contribution and thus ruins the result.
 
Technically speaking, Assumption~\eqref{eq:rectangular} enters the proof in the estimate of the boundary integral appearing in Lemma~\ref{lem:ord param 1}. It allows to exploit a new pointwise estimate of the tangential derivative of $ |\glm|^2 $ that we prove in Lemma~\ref{lem:est gradient s}. It is important to remark that such an estimate does not hold for the normal component of the gradient (see the discussion preceding Lemma \ref{lem:est gradient s}).
\label{rem:rectangular}
\end{remark}

\begin{remark}(Limiting regimes.)
\mbox{}	\\
In \cite[Corollary 1.3]{FK} a similar result about the boundary behavior of surface superconductivity is proven for magnetic fields slightly below the second critical one $ \Hcc $, i.e., for $ b \to 1^- $, $ b \leq 1 $. In this estimate the leading order is the same as in \eqref{eq:main estimate} but the first correction is proportional to the area of the set $ |D| $ and is due to the bulk behavior of the superconductor. 

	The regime where $ b \to \theo^{-1} $ at the same time as $\eps \to 0$ was studied in details in \cite{FH1} (see in particular \cite[Theorem 1.4]{FH1} and the discussion thereafter). The behavior of the GL functional becomes approximately linear in this limit, and therefore the whole machinery of spectral theory of linear operators can be exploited to extract a lot of details about the boundary behavior. In particular, according to the asymptotics of $ \Theta_0^{-1} - b $ when $\eps \to 0$, superconductivity can be either uniformly distributed all over the boundary or concentrated close to the points of maximal curvature. The first order correction to the boundary behavior is however not known, although thanks to the approximate linearity of the problem, estimates can be formulated in terms of the integral of $ |\glm|^2 $ instead of $ |\glm|^4 $.
\end{remark}

\begin{remark}(Sign of the curvature correction)
\mbox{}	\\
Unfortunately we are not able to determine the sign of the correction in \eqref{eq:main estimate}. Based on the results of~\cite{FH1} we have just recalled, we conjecture that $ \fc_{\alpha_0} [f_0] > 0 $ for any $ 1 < b  < \Theta_0^{-1} $. This would mean that points with large curvature attract more superconductivity. We can prove this conjecture only when $b$ is close (independently of $\eps$) to $\theo ^{-1}$, see Lemma~\ref{lem:sign}. 

Note that the 2D setting we consider here corresponds to an infinite 3D cylinder with a magnetic field parallel to the axis. In a more general 3D setting, the angle between the magnetic field and the surface of the sample also plays a role in the distribution of surface superconductivity, see~\cite{FKP} and references therein.
\end{remark} 

The rest of the paper contains the proof of Theorem \ref{theo:main}. In Section~\ref{sec:1D} we first discuss the perturbative expansion of the 1D ground state energy, obtain the expression~\eqref{eq:curv corr} and prove that it is positive close to the third critical field. Section~\ref{sec:ener dens} contains a result of the same form as Theorem~\ref{theo:main} where $|\glm| ^4$ is replaced by the GL energy density~\eqref{eq:ener dens}. Finally, in Section~\ref{sec:ord param}, which is the more involved of the three, we deduce our main result from the estimate of the energy density. 

\medskip

\noindent\textbf{Acknowledgments.} M.C. acknowledges the support of MIUR through the FIR grant 2013 ``Condensed Matter in Mathematical Physics (Cond-Math)'' (code RBFR13WAET). N.R. acknowledges the support of the ANR project Mathostaq (ANR-13-JS01-0005-01). We are indebted to one of the anonymous referees of our previous paper~\cite{CR2}, whose remarks motivated the present investigation.

\section{Correction to the 1D Energy}\label{sec:1D}


The asymptotic expansion of the 1D energy that is behind \eqref{eq:main estimate} is a direct consequence of first-order perturbation theory. We first prove that the functional \eqref{eq:corr func} gives the subleading order of the 1D energy. 

Let us recall the notation: for any $ k \in \R $, $\eone_\star(k) $ is the infimum of \eqref{eq:1D func} w.r.t. both $f $ and $ \alpha $, while $f_0$ and $\alpha_0$ stand for the minimizing pair of \eqref{eq:1D func bis}.

\begin{lem}[\textbf{Perturbative expansion of the 1D energy}]\label{lem:1D sublead}\mbox{}\\
As $\eps \to 0$  
\begin{align}\label{eq:1D pert}
\eone_\star \left(k\right) &= \eoneo - \eps k \fc_{\alpha_0} [f_0] + O (\eps ^{3/2} |\log \eps| ^{\gamma}) \nonumber\\
&= - \frac{1}{2b} \int_0 ^{+\infty} \fO ^4 - \eps k \fc_{\alpha_0} [f_0] + O (\eps ^{3/2} |\log \eps| ^{\gamma})
\end{align}
for some fixed $\gamma >0$ where $\fc$ is defined as in~\eqref{eq:corr func}.
\end{lem}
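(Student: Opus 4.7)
The approach is standard first-order perturbation theory about $\eps k = 0$. First I would Taylor expand the weight $(1-\eps k t)$ and the potential $V_{k,\alpha}$ of~\eqref{eq:pot} in powers of $\eps k t$ on the cut-off interval $[0,c_0|\log\eps|]$. Using
\[
\big(t+\alpha-\tfrac12 \eps k t^2\big)^2 = (t+\alpha)^2 - \eps k\, t^2(t+\alpha) + O\big((\eps k)^2 t^4\big)
\]
together with $(1-\eps k t)^{-1} = 1 + \eps k t + O((\eps k t)^2)$, a direct algebraic manipulation of the integrand yields the pointwise identity
\[
\fone_{k,\alpha}[f] \;=\; \fone_{0,\alpha}[f] \;-\; \eps k\, \fc_\alpha[f] \;+\; R_{k,\alpha}[f],
\]
the remainder being controlled by $(\eps|\log\eps|)^2$ times weighted integrals of $f^2$ and $f^4$ on the cut-off interval. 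The discrepancy between $\fone_{0,\alpha}[f]$ integrated on $[0,c_0|\log\eps|]$ versus $[0,+\infty)$ is exponentially small whenever $f$ has Agmon/Gaussian decay, thanks to the estimates on $f_0$ recalled in \cite[Proposition~3.3]{CR1}.

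For the upper bound I would test on $(f_0,\alpha_0)$, which immediately gives $\eone_\star(k) \le \eoneo - \eps k\, \fc_{\alpha_0}[f_0] + O(\eps^2|\log\eps|^\gamma)$. For the lower bound, I would take a minimizing pair $(\fk,\alk)$ of $\fone_{k,\alpha}$ and apply the expansion together with $\fone_{0,\alk}[\fk] \ge \eoneo$, getting
\[
\eone_\star(k) \;\ge\; \eoneo - \eps k\, \fc_{\alk}[\fk] - O(\eps^2|\log\eps|^\gamma).
\]
It then remains to replace $\fc_{\alk}[\fk]$ by $\fc_{\alpha_0}[f_0]$ up to $o(\eps^{1/2})$.

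The crucial input at this stage is a stability estimate at the $k=0$ minimizer. Comparing the upper and lower bounds already derived gives $\fone_{0,\alk}[\fk]-\eoneo = O(\eps|\log\eps|^\gamma)$. Using the uniqueness (up to sign) of $(f_0,\alpha_0)$ together with the coercivity of the Hessian of $\fone_{0,\alpha}[f]$ at its minimizer (in both $f$ and $\alpha$), this energy gap translates into
\[
|\alk - \alpha_0| + \|\fk - f_0\|_{H^1([0,+\infty))} \;=\; O\big(\eps^{1/2}|\log\eps|^{\gamma/2}\big),
\]
which then yields $\fc_{\alk}[\fk] = \fc_{\alpha_0}[f_0] + O(\eps^{1/2}|\log\eps|^{\gamma'})$. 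Multiplied by $\eps k$ this produces the announced $O(\eps^{3/2}|\log\eps|^\gamma)$ correction. I expect this step to be the main technical obstacle, both because of the need to treat $\alpha$ as an additional degree of freedom in the coercivity argument and because the unbounded weight $t$ appearing in $\fc_\alpha[\cdot]$ forces one to transfer Agmon-type decay estimates from $f_0$ to $\fk$, uniformly for bounded $k$.

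Finally, the identity $\eoneo = -\frac{1}{2b}\int_0^{+\infty} f_0^4$ is a short consequence of the Euler--Lagrange equation $-f_0''+(t+\alpha_0)^2 f_0 = b^{-1}(f_0 - f_0^3)$: testing against $f_0$ and integrating by parts (using the decay of $f_0$ and $f_0'$) gives $\int\{(f_0')^2 + (t+\alpha_0)^2 f_0^2\} = b^{-1}\int(f_0^2 - f_0^4)$, and substituting into the definition of $\eoneo$ isolates $-\frac{1}{2b}\int_0^{+\infty} f_0^4$, as claimed.
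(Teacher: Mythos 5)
Your proposal is correct and follows essentially the same route as the paper: Taylor expansion of the weight and potential, upper bound by testing $\fone_{k,\alpha}$ on $(f_0,\alpha_0)$, lower bound via $\fone_{0,\alk}[\fk]\ge\eoneo$ applied to a minimizing pair, and the identity $\eoneo=-\frac{1}{2b}\int_0^{+\infty}\fO^4$ from the Euler--Lagrange equation. The only divergence is the stability step $\fc_{\alk}[\fk]=\fc_{\alO}[f_0]+O(\eps^{1/2}|\log\eps|^{\gamma})$: the paper simply imports it from the quantitative estimates on $(\fk,\alk)$ in \cite[Proposition 1]{CR2}, whereas you propose to re-derive it from the energy gap plus coercivity of $\fone_{0,\alpha}$ at $(f_0,\alpha_0)$ -- a legitimate alternative, but note that the required nondegeneracy in $\alpha$ and the transfer of Agmon decay to $\fk$ (needed both for the coercivity argument and to control the $t$-weighted remainders in the expansion along the minimizer) are themselves nontrivial inputs established in \cite{CR1,CR2}, so in practice you would be reproving the cited result rather than bypassing it.
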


\begin{proof}
We first take $f_0,\alpha_0$ as a trial pair for the functional~\eqref{eq:1D func}, which yields
$$ \eone_\star \left(k\right) \leq \eoneo - \eps k \fc_{\alpha_0} [f_0] + O (\eps ^2) $$
by expanding $\fone_{k,\alpha_0}[f_0]$ and using that $\alpha_0,f_0$ do not depend on $\eps$. In order to estimate the remainders it suffices to use the exponential decay of $f_0 $ \cite[Proposition 3.3]{CR1}, which implies that
\bdm
	\int_0^{\infty} \diff t \: t^n f_0^2(t) = O(1),
\edm
for any $ n \in \N $. The decay estimate also implies that we make no significant error by considering the $k=0$ functional as being defined on the whole half-line, provided $c_0$ is large enough.

Next we write 
$$ \eone_\star \left(k\right) = \fone_{k,\alk} [\fk] = \fone_{0,\alk} [\fk] - \eps k \fc_{\alk} [f_k] + O (\eps ^2) \geq \eoneo - \eps k \fc_{\alk} [f_k] + O (\eps ^2),$$
where we use the variational principle defining $\eoneo$. Finally it easily follows from the estimates of~\cite[Proposition 1]{CR2} that 
$$ \fc_{\alk} [f_k] = \fc_{\alO} [f_0] + O (\eps ^{1/2} |\log \eps| ^{\gamma})$$ for some $\gamma >0$. Gathering the previous inequalities, we get an upper and a lower bound to $\eone_\star \left(k\right)$ which together give~\eqref{eq:1D pert}. That 
\begin{equation}\label{eq:1D ener f4} 
\eoneo = -\frac{1}{2b} \int_{0} ^{+\infty} \diff t \: \fO ^4  
\end{equation}
follows by mutliplying the variational equation 
\begin{equation}\label{eq:var eq fO}
 - \dd_t ^2 \fO + (t+ \alO) ^2 \fO  = \frac{1}{b} (1-\fO ^2) \fO 
\end{equation}
 by $ f_0 $ and integrating.
\end{proof}

The expression~\eqref{eq:curv corr} is somewhat unwieldy, but can be simplified a lot: 

\begin{lem}[\textbf{Simple formula for the 1D energy correction}]\label{lem:expression}\mbox{}\\
Let $\alO,f_0$ be a minimizing pair for the 1D functional~\eqref{eq:1D func bis} at $k=0$. Then, for all $1 < b < \theo ^{-1}$,
\begin{align}\label{eq:corr final exp}
\fc_{\alO} [f_0] &=  \frac{2}{3} (1-\alO ^2 b) +\frac{ \alO}{2b} \int_0 ^{+\infty} \diff t \: \fO ^4 \nonumber\\
&= \frac{1}{3} \fO ^2 (0) - \alO \eoneo.
\end{align}
\end{lem}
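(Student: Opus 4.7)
The plan is to compute $\fc_{\alO}[\fO]$ by systematic integration by parts, using the Euler--Lagrange equation~\eqref{eq:var eq fO} satisfied by $\fO$, the Neumann boundary condition $\fO'(0)=0$, and the $\alpha$-optimality condition $\int_0^{+\infty}(t+\alO)\fO^2\,\diff t=0$ obtained by differentiating $\fone_{0,\alpha}[\fO]$ at $\alO$. As noted before the lemma, the upper cutoff $c_0|\log\eps|$ in~\eqref{eq:corr func} may be replaced by $+\infty$ with exponentially small error, thanks to the decay of $\fO$, so I work on the half-line. I would first establish the equivalence of the two right-hand sides by proving the boundary identity $\fO(0)^2=2(1-b\alO^2)$. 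Multiplying~\eqref{eq:var eq fO} by $2\fO'$ shows that
$$E(t):=-(\fO')^2+(t+\alO)^2\fO^2-\tx\frac{1}{b}\fO^2+\tx\frac{1}{2b}\fO^4$$
satisfies $E'(t)=2(t+\alO)\fO^2$. By $\alpha$-optimality, $E(+\infty)-E(0)=0$; since $\fO,\fO'\to 0$ at infinity and $\fO'(0)=0$, this forces $E(0)=0$, giving $\fO(0)^2=2(1-b\alO^2)$. Combined with~\eqref{eq:1D ener f4} this yields $\tx\frac{1}{3}\fO(0)^2=\tx\frac{2}{3}(1-b\alO^2)$ and $-\alO\eoneo=\tx\frac{\alO}{2b}\int\fO^4$, proving that the two expressions in the statement coincide.

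Next I would compute $\fc_{\alO}[\fO]$ itself. Eliminating $\int_0^{+\infty}t(\fO')^2\,\diff t$ via~\eqref{eq:var eq fO} multiplied by $t\fO$ (the boundary term at $t=0$ contributing $-\fO(0)^2/2$), substituting into~\eqref{eq:corr func}, and using the algebraic identity $t(t+\alO)^2+\alO\,t(t+\alO)=(t+\alO)^3-\alO^2(t+\alO)$ together with the $\alpha$-optimality (which kills all $\int(t+\alO)\fO^2$ contributions and rewrites $\int t\fO^4=\int(t+\alO)\fO^4-\alO\int\fO^4$), one arrives at
$$\fc_{\alO}[\fO]=\tx\frac{1}{2}\fO(0)^2-I-\alO\eoneo,\qquad I:=\int_0^{+\infty}(t+\alO)^3\fO^2\,\diff t+\tx\frac{1}{2b}\int_0^{+\infty}(t+\alO)\fO^4\,\diff t.$$

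The remaining task is to show $I=\fO(0)^2/6$, which I would obtain from two Pohozaev-type identities. First, multiplying $E'(t)=2(t+\alO)\fO^2$ by $(t+\alO)^2$ and integrating by parts (all boundary contributions vanish since $E(0)=E(+\infty)=0$) produces $2\int(t+\alO)^3\fO^2=\int(t+\alO)(\fO')^2-\tx\frac{1}{2b}\int(t+\alO)\fO^4$. Second, multiplying~\eqref{eq:var eq fO} by $(t+\alO)\fO$ and integrating by parts (the boundary term again yielding $-\fO(0)^2/2$, and using $\alpha$-optimality) gives $\int(t+\alO)(\fO')^2+\int(t+\alO)^3\fO^2=\tx\frac{1}{2}\fO(0)^2-\tx\frac{1}{b}\int(t+\alO)\fO^4$. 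Eliminating $\int(t+\alO)(\fO')^2$ between the two identities yields $3I=\tx\frac{1}{2}\fO(0)^2$, hence $\fc_{\alO}[\fO]=\tx\frac{1}{3}\fO(0)^2-\alO\eoneo$, which is the second expression in the lemma, equivalent to the first by the equivalences established above.

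The main obstacle is purely bookkeeping: several weighted $L^2$-identities must be combined in the correct order, and one must verify at each step that boundary contributions at $t=0$ survive only through $\fO(0)^2$ (since $\fO'(0)=0$), while all contributions at $+\infty$ vanish by the exponential decay of $\fO$. The key conceptual inputs are the two ``first-integral'' consequences of the minimization, namely the $\alpha$-optimality and the vanishing boundary Pohozaev quantity $E(0)$, both of which must be used repeatedly.
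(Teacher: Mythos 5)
Your proof is correct, and every identity you state checks out (I verified the reduction $\fc_{\alO}[\fO]=\tfrac12\fO^2(0)-I-\alO\eoneo$ and the two multiplier identities giving $3I=\tfrac12\fO^2(0)$). The skeleton is the same as the paper's — reduce everything to the third-moment quantity $\int(t+\alO)^3\fO^2+\tfrac{1}{2b}\int(t+\alO)\fO^4$, which is exactly the paper's identity~\eqref{eq:third moment}, and then compute it — but the intermediate machinery differs in two respects. First, the paper's Step~1 and Step~3 rely on the virial identity~\eqref{eq:virial} (obtained from scaling invariance and Feynman--Hellmann), which you bypass entirely: your reduction uses only the multiplier $t\fO$ against the variational equation plus the algebraic identity $t(t+\alO)(t+2\alO)=(t+\alO)^3-\alO^2(t+\alO)$ and $\alpha$-optimality, which is arguably cleaner. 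Second, the paper computes the third moment by evaluating $\langle \fO, H_{\alO}v\rangle$ for the test function $v=(t+\alO)^2\fO'-(t+\alO)\fO$ (following Fournais--Helffer), whereas you obtain it from two Pohozaev-type identities built on the first integral $E(t)$ with $E'(t)=2(t+\alO)\fO^2$; these are two organizations of the same underlying computation (your two multipliers are precisely the two pieces of their $v$), but your route has the added benefit of producing the boundary identity $\fO^2(0)=2(1-b\alO^2)$ for free from $E(0)=0$, which the paper imports from~\cite{CR1}. The only points worth making explicit in a write-up are that $\fO(0)\neq 0$ (strict positivity of $\fO$, known from~\cite{CR1}) is needed to divide by $\fO^2(0)$ when extracting~\eqref{eq:f boundary} from $E(0)=0$, and that the super-exponential decay of $\fO$ justifies both the vanishing of all boundary terms at infinity against polynomial weights and the replacement of the cutoff $c_0|\log\eps|$ by $+\infty$.
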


\begin{proof}
Let us first recall two useful identities:
\begin{equation}\label{eq:opt cond}
\int_{0} ^{+\infty} \diff t\: (t + \alO) f_0 ^2 (t) = 0 
\end{equation}
expresses the optimality of $\alO$, see, e.g., \cite[Eq. (3.20)]{FHP} or \cite[Lemma 3.1]{CR1}, while 
\begin{equation}\label{eq:virial}
\int_{0} ^{+\infty} \diff t\: t (t + \alO) f_0 ^2 (t) = \int_{0} ^{+\infty} \diff t\: \left( |\dd_t f_0 | ^2 + \frac{1}{4b} f_0 ^4  \right) 
\end{equation}
is a virial identity, equivalent to~\cite[Eq. (3.22)]{FHP}. It is obtained by noting that 
$$ E_{\alO} (\ell) := \inf_f \foneO \left[\frac{1}{\sqrt{\ell}} f ( \cdot /\ell)\right] = \inf_f \foneO [f] $$
for all $\ell$ and thus 
$$ \lf. \dd_\ell E_{\alO} (\ell) \ri|_{\ell = 1} = 0. $$
Using the Feynman-Hellmann principle to evaluate the latter expression one gets~\eqref{eq:virial}. 

We now start the computation:

\noindent\textbf{Step 1.} We claim that
\begin{equation}\label{eq:corr bis}
\fc_{\alO} [f_0] = \int_0 ^{+\infty} \diff t \: \left( \frac{t}{b} - \frac{3\alO}{4b}\right)\fO ^4 (t) + \int_0 ^{+\infty} \diff t \: \left(2 t ^3 -\frac{2t}{b} - 2 \alO ^2 t + \frac{\alO}{b}\right)\fO ^2 (t). 
\end{equation}
First, using the virial identity~\eqref{eq:virial} we obtain 
$$ \fc_{\alO} [f_0] = \int_0 ^{+\infty} \diff t \lf\{  (t-\alO) |\dd_t \fO| ^2 + \left( \frac{t}{2b} - \frac{\alO}{4b}\right)\fO ^4 (t) - \frac{t}{b} \fO ^2 \ri\}.$$
Next, integrating by parts and using the Neumann boundary condition for $\fO$, 
$$ \int_0 ^{+\infty} \diff t \: (t-\alO) |\dd_t \fO| ^2 = - \int_0 ^{+\infty} \diff t \: (t-\alO) ^2 \dd_t \fO \dd_t ^2 \fO.$$
Inserting the variational equation~\eqref{eq:var eq fO} and integrating by parts again we deduce 
\bmln{
 \int_0 ^{+\infty} \diff t \: (t-\alO) |\dd_t \fO| ^2  = \int_0 ^{+\infty}\diff t \: \bigg\{ (t-\alO) ^2 (t+\alO) + (t+\alO) ^2 (t-\alO) \\ 
 \lf. - \frac{t-\alO}{b} + \frac{t-\alO}{2b} \fO ^2 (t) \ri\}  \fO ^2 (t) + \frac{\alO ^4}{2} \fO ^2 (0) - \frac{\alO ^2}{2b} \fO ^2 (0) + \frac{\alO ^2}{4b} \fO ^4 (0). 
}
The claim then follows from the identity (see, e.g., \cite[Proof of Lemma 3.3]{CR1})
\begin{equation}\label{eq:f boundary}
 \fO ^2 (0) = 2 - 2 \alO ^2 b
\end{equation}
and a bit of algebra.

\medskip

\noindent\textbf{Step 2.} Next we compute that 
\begin{equation}\label{eq:third moment}
\int_{0} ^{+\infty} \diff t \: (t+\alO) ^3 \fO ^2 = - \frac{1}{2b} \int_0 ^{+\infty} \diff t \: (t+\alO) \fO ^4 + \frac{1}{3} \left( 1- \alO ^2 b \right), 
\end{equation}
following~\cite[Proof of Lemma~3.2.7]{FH-book}. Let us denote 
$$ H_{\alO} : = -\dd_t ^2 + (t+\alO) ^2 + \tx\frac{1}{b} \fO ^2 - \frac{1}{b}$$
and recall that $\fO$ is a ground state for this Schr\"odinger operator, with eigenvalue $0$. For any function $v$ we therefore have, integrating by parts 
$$ \left\langle \fO, H_{\alO} v \right\rangle = v' (0) \fO (0).$$
If we apply this with $v =   (t+\alO) ^2 \fO' -  (t + \alO) \fO$, we find 
$$ \left\langle \fO, H_{\alO} v \right\rangle = \left(  \alO ^4 +   \frac{\alO ^2}{b}(\fO ^2 (0)-1) -   1\right) \fO^2 (0).$$
On the other hand, using the variational equation, 
\begin{align*}
 -\dd_t ^2 v &= - 3 (t + \alO) \fO ^{\prime\prime} - (t + \alO)^2 \fO ^{\prime\prime\prime} \\
 &= - (t+\alO) \left( 5(t+\alO) ^2 + \tx\frac{3}{b}  (\fO ^2 - 1)\right)  \fO
 -  (t+\alO )^2 \left( (t+\alO) ^2 + \tx\frac{3}{b} \fO ^2 - \frac{1}{b} \right) \fO^ {\prime}, 
 \end{align*}
so that 
$$ H_{\alO} v = - 6 (t+\alO) ^3 \fO - \tx\frac{4}{b} (t+\alO) (\fO ^2 - 1) \fO - \frac{2}{b} (t+\alO) ^2 \fO ^2 \fO ^{\prime}.$$
Multiplying this by $\fO$ and integrating yields
$$ \left\langle \fO, H_{\alO} v \right\rangle = -6 \int_{0} ^{+\infty} \diff t \: (t+\alO) ^3 \fO ^2  - \frac{3}{b} \int_0 ^{+\infty} \diff t \: (t+\alO) \fO ^4  + \frac{\alO^2}{2b} \fO ^4 (0)$$
where we use also~\eqref{eq:opt cond} and an integration by parts. Equating the two different expressions of $\left\langle \fO, H_{\alO} v \right\rangle$ we obtained and noting that, since $\fO ^2 (0) = 2 - 2 \alO^2 b$, 
$$ \fO^2 (0) \left( -\alO ^4  - \frac{\alO ^2}{b}  (\fO ^2 (0)-1) + 1\right) + \frac{\alO^2}{2b} \fO ^4 (0) = \fO ^2 (0)  = 2\left( 1- \alO ^2 b\right) $$
yields~\eqref{eq:third moment}.
 
\medskip

\noindent\textbf{Step 3.} From~\eqref{eq:third moment},~\eqref{eq:opt cond} and~\eqref{eq:virial} we get 
\begin{align*}
 \int_{0} ^{+\infty} \diff t \: t ^3 \fO ^2&= \int_{0} ^{+\infty} \diff t \lf\{ (t+\alO) ^3  - 3 \alO ^2  t    - 3 \alO   t^2   - \alO ^3   \ri\} \fO ^2\\
 &= \frac{1}{6}\fO ^2 (0) + \int_{0} ^{+\infty} \diff t \lf\{ - \frac{1}{2b} (t+\alO) \fO ^4 - 3 \alO  |\dd_t \fO| ^2 - \frac{3\alO}{4b}  \fO ^4 - \alO ^3   \fO ^2 \ri\}.
\end{align*}
Inserting in~\eqref{eq:corr bis} and using~\eqref{eq:opt cond} again we find 
\begin{align}\label{eq:near}
 \fc_{\alO} [f_0] &= \frac{1}{3}\fO ^2 (0) + \int_{0} ^{+\infty} \diff t \lf\{- \frac{13\alO}{4b} \fO ^4  - 6 \alO  |\dd_t \fO| ^2 +  \left( \frac{\alO}{b} - \frac{2t}{b} - 2 \alO ^2 t - 2\alO ^3\right)\fO ^2  \ri\} \nonumber\\
 &= \frac{1}{3}\fO ^2 (0) + \int_{0} ^{+\infty} \diff t \lf\{ - \frac{13\alO}{4b}\fO ^4  - 6 \alO     |\dd_t \fO| ^2 + 3 \frac{\alO}{b}     \fO ^2 \ri\}.
\end{align}
Finally we note that
\begin{align*}
 \int_{0} ^{+\infty} \diff t \: |\dd_t \fO| ^2 &= \eoneo - \frac{1}{2b} \int_{0} ^{+\infty} \diff t \: \fO ^4 +  \int_{0} ^{+\infty} \diff t \lf( \frac{1}{b}  -  (t+\alO) ^2 \ri) \fO ^2 \\
 &= -\frac{1}{b}\int_{0} ^{+\infty} \diff t \: \fO ^4 +  \int_{0} ^{+\infty} \diff t \lf( \frac{1}{b}  -  (t+\alO) ^2 \ri) \fO ^2
\end{align*}
whereas~\eqref{eq:opt cond} and~\eqref{eq:virial} together imply
$$ \int_{0} ^{+\infty} \diff t \: (t+\alO) ^2 \fO ^2 = \int_{0} ^{+\infty} \diff t \lf\{ |\dd_t \fO| ^2 + \frac{1}{4b}  \fO ^4 \ri\} $$
so that, combining the two identities,
$$ \int_{0} ^{+\infty} \diff t \: |\dd_t \fO| ^2 =  \int_{0} ^{+\infty}\diff t \: \lf\{ \frac{1}{2b}\fO ^2 - \frac{5}{8b}  \fO ^4 \ri\}.$$
Inserting this in~\eqref{eq:near} and recalling once more~\eqref{eq:f boundary} and~\eqref{eq:1D ener f4} this yields the final expressions~\eqref{eq:corr final exp}.
\end{proof}

Unfortunately we are not able to determine the sign of the energy correction from the expressions we found. However, when $b\to \theo ^{-1}$ we have more information: it is known that $\fO ^2$ scales as $\left(1 - b\theo\right) ^{1/2}$. It then immediately follows from Lemma~\ref{lem:expression} that the correction must be positive for $b$ close enough to $\theo ^{-1}$:

\begin{lem}[\textbf{Sign of the correction close to $\Hccc$}]\label{lem:sign}\mbox{}\\
There exists $ 1< b_0 < \theo ^{-1}$ such that, for all $b_0 < b < \theo^{-1}$,   
\begin{equation}\label{eq:sign Hccc}
\fc_{\alO} [f_0] > 0. 
\end{equation}
\end{lem}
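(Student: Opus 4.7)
The plan is to exploit the simplified formula from Lemma~\ref{lem:expression},
$$
\fc_{\alO}[f_0] = \tx\frac{1}{3}\fO^2(0) - \alO\, \eoneo,
$$
and compare the orders of magnitude of the two terms as $b \nearrow \theo^{-1}$. I expect $\fO^2(0)$ to vanish \emph{linearly} in $1-b\theo$ while $\eoneo$ vanishes \emph{quadratically}, so the first, manifestly non-negative, term dominates and the result is strictly positive.

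First I would invoke the standard bifurcation analysis of $\foneoal$ near the linear critical coupling $b = \theo^{-1}$. Let $\alpha_*$ denote the unique minimizer of the shifted-oscillator eigenvalue $\alpha \mapsto \theta_0(\alpha) = \min_{\|u\|_2=1}\int_0^{+\infty}\{|u'|^2 + (t+\alpha)^2 u^2\}$, and $u_*$ the corresponding normalized ground state, with $\theta_0(\alpha_*) = \theo$. Taking $\lambda u_*$ with $\lambda^2 = (1 - b\theo)/\int_0^{+\infty} u_*^4$ as a trial function in $\foneoal$, and matching with the corresponding lower bound by Fournais--Helffer perturbation theory (see, e.g.,~\cite{FH-book,FH1}), one obtains
$$
\eoneo = -\frac{(1-b\theo)^2}{2b \int_0^{+\infty} u_*^4}\,\bigl(1+o(1)\bigr), \qquad \alO \longrightarrow \alpha_*,
$$
so in particular $|\eoneo| = O\bigl((1-b\theo)^2\bigr)$ and $\alO$ stays bounded as $b \to \theo^{-1}$.

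Second, I would combine the boundary identity~\eqref{eq:f boundary}, namely $\fO^2(0) = 2(1 - b\alO^2)$, with the de~Gennes relation $\alpha_*^2 = \theo$ (which pins down the optimal shift for the half-plane problem). Together with the convergence $\alO \to \alpha_*$ just established, this yields
$$
\fO^2(0) = 2(1-b\theo) + o(1-b\theo),
$$
and, in particular, $\fO^2(0) \geq c(1-b\theo)$ for some $c > 0$ and every $b$ close enough to $\theo^{-1}$. Plugging the two estimates into the formula above, the linear contribution $\tx\frac{1}{3}\fO^2(0) \geq \tx\frac{c}{3}(1-b\theo)$ strictly dominates the quadratic remainder $|\alO\,\eoneo| = O\bigl((1-b\theo)^2\bigr)$, which proves $\fc_{\alO}[f_0] > 0$ for all $b_0 < b < \theo^{-1}$ with $b_0$ sufficiently close to $\theo^{-1}$.

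The main obstacle is the quantitative bifurcation analysis underlying the first step, i.e., the two-sided control $\eoneo = -C(1-b\theo)^2(1+o(1))$ with $C>0$ together with the rate $\alO - \alpha_* = o(1)$. This is essentially a rigorous perturbation argument around the trivial branch at the linear critical coupling, treated in the cited references; the remaining work is mostly to adapt their normalizations to the one used in~\eqref{eq:1D func bis}.
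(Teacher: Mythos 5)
Your overall strategy coincides with the paper's: use the simplified formula of Lemma~\ref{lem:expression} together with the bifurcation analysis near $b=\theo^{-1}$ to show that $\fO^2(0)$ vanishes linearly in $1-b\theo$ while $\eoneo$ vanishes quadratically, so the manifestly non-negative term wins. Your first step (the asymptotics $\eoneo=-(1-b\theo)^2/(2b\|u_*\|_4^4)(1+o(1))$ and $\alO\to\alpha_*=-\sqrt{\theo}$) is exactly what the paper imports from the Fournais--Helffer analysis.

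The gap is in your second step. From the exact identity $\fO^2(0)=2(1-b\alO^2)$ you write $1-b\alO^2=(1-b\theo)+b(\theo-\alO^2)$ and need $b(\theo-\alO^2)=o(1-b\theo)$ to conclude $\fO^2(0)=2(1-b\theo)(1+o(1))$. But the convergence $\alO\to\alpha_*$ that you establish only gives $\theo-\alO^2=o(1)$, which is useless here since $1-b\theo\to 0$ as well; you would need the quantitative rate $|\alO-\alpha_*|=o(1-b\theo)$, which you do not prove and which is in fact expected to fail: balancing the $\alpha$-dependence of $\theta_0(\alpha)$ (quadratic near $\alpha_*$, with prefactor of order $1-b\theo$ in the reduced energy) against that of $\|u_\alpha\|_4^4$ puts $\alO-\alpha_*$ at the scale $O(1-b\theo)$, so $b(\theo-\alO^2)$ is generically \emph{comparable} to $1-b\theo$, not negligible. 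Equivalently, your claimed expansion would force the identity $u_*^2(0)=2\|u_*\|_4^4$ for the linear de~Gennes ground state, whereas the correct asymptotics is $\fO^2(0)=\frac{u_*^2(0)}{\|u_*\|_4^4}(1-b\theo)(1+o(1))$ with no reason for that coefficient to equal $2$. The lemma survives because all you need is a lower bound $\fO^2(0)\geq c(1-b\theo)$ for some $c>0$, and this follows, as in the paper, from the $L^\infty$ convergence of the rescaled profile $\bigl(\|u_*\|_4^4/(1-b\theo)\bigr)^{1/2}\fO\to u_*$ evaluated at $t=0$, together with $u_*(0)>0$. That pointwise convergence is part of the same bifurcation analysis you already invoke, so the repair is minor; but as written, the route through the boundary identity plus mere convergence of $\alO$ does not close.
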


\begin{proof}
We recall from~\cite[Section~3.2]{FH-book} that the minimum in~\eqref{eq:def theo} is achieved by a unique pair $u_0,\alpha_{\mathrm{opt}}$ with $\alpha_{\mathrm{opt}} = - \sqrt{\theo}$ and $ u_0 $ normalized in $ L^2 $. At $b>\theo ^{-1}$ it is easy to see that $f_0\equiv 0$. When $b\to \theo ^{-1}$ one should therefore expect $\fO \to 0$, in which case the quartic term becomes a second order correction. One should thus expect that the solution is close to that of the linear problem, the only effect of the quartic term being to fix the overall normalization. More precisely, following the techniques of~\cite[Section~14.2.2]{FH-book} one can show that 
$$ \alO \underset{b\to \theo ^{-1}}{\longrightarrow} - \sqrt{\theo}$$
and that 
$$ \left(\frac{\lf\| u_0 \ri\|_4^4 }{ 1- b \theo}\right) ^{1/2} \fO \underset{b\to \theo ^{-1}}{\longrightarrow} u_0.$$
It is easy to see that the latter convergence holds in the quadratic form domain of the harmonic oscillator. Using standard elliptic estimates, one can upgrade this to any  Sobolev or H\"older norm. In particular, convergence holds in $L ^{\infty}$ and $L^4$, so that 
$$ \fO ^2 (0) = \frac{(1 -b\theo)}{\lf\| u_0 \ri\|_4^4}  u_0 ^2 (0) (1+o_b(1))$$
where we denoted by $ o_b(1) $ a quantity going to $ 0 $ as $ b \to \theo^{-1} $, and 
$$ \int_{0} ^{+\infty} \diff t \: \fO ^4 = \frac{(1 -b\theo) ^2}{\lf\| u_0 \ri\|_4^4} (1+ o_b(1)).$$
From Lemma~\ref{lem:expression} we have that  
\bmln{
	\fc_{\alO} [f_0] = \frac{1}{3} \fO ^2 (0) + \frac{ \alO}{2b} \int_0 ^{+\infty} \diff t \: \fO ^4 = \frac{1 - b \theo}{3\lf\| u_0 \ri\|_4^4}  \lf[ u_0^2(0) + \frac{3 \alO}{2b}(1 - b \theo) \ri] 	\\
	= \frac{1 - b \theo}{3\lf\| u_0 \ri\|_4^4} \lf[ u_0^2(0) + o_b(1) \ri] > 0
}
since $ u_0(0) > 0 $ is independent of $ b $.
\end{proof}

\section{Estimates of the Energy Density}\label{sec:ener dens}

Our main estimate on the order parameter is obtained by exploiting the link between $|\glm| ^4$ and the GL energy density. We discuss first the asymptotics for the latter.

\begin{pro}[\textbf{Estimates for the energy density}]\label{pro:ener dens}\mbox{}\\
Let $\gled$ be the GL energy density defined in~\eqref{eq:ener dens} and $D \subset \Om$ be a measurable set satisfying~\eqref{eq:rectangular}.
Under the same assumptions and with the same notation as in Theorem~\ref{theo:main} we have, as $\eps \to 0$
\begin{equation}\label{eq:ener dens estimate}
\int_{D} \diff \rv \: \gled = \eps ^{-1} \eoneo |\dd \Om \cap \dd D| -  \fc_{\alO} [f_0] \int_{\dd D\cap \dd \Om} \diff s \: k(s) + O(\eps ^{1/2} |\log \eps| ^{\gamma}).  
\end{equation}
\end{pro}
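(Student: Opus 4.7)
The plan is to localize the integral to the scaled boundary layer $\annt$, change to boundary coordinates $(s,t) = \Phi(\rv)$ covering $\ann$, and compare slice-by-slice with the 1D functional $\fone_{k(s),\alpha(s)}$, invoking finally the perturbative expansion of Lemma~\ref{lem:1D sublead}.

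First, I would check that the contribution from $D \setminus \annt$ is negligible. Standard Agmon-type exponential decay estimates on $\glm$ away from $\dd\Om$ (see \cite[Proposition~3.1]{CR1}), combined with the pointwise control on $\curl \aavm - 1$ from \cite[Section~4]{CR2}, give $\int_{D\setminus\annt} \gled = O(\eps^{\infty})$ provided $c_0$ is chosen large enough. On $D \cap \annt$, assumption~\eqref{eq:rectangular} ensures that the image under $\Phi$ is the product $[s_D,s'_D] \times [0,c_0|\log\eps|]$. Rewriting the GL functional in boundary coordinates as in \cite[Section~5]{CR1} and \cite[Section~4]{CR2}, and gauging out the tangential phase of $\glm$ (which defines a local $s$-dependent phase $\alpha(s)$), we obtain a formula resembling $\fone_{k(s),\alpha(s)}$ slice by slice, with weight $(1-\eps k(s) t)$ and potential $\pot$ as in~\eqref{eq:pot}, plus magnetic and Jacobian corrections controlled by \cite[Section~4]{CR2}.

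The lower bound follows by slicing in $s$ and using the variational principle for $\fone_{k(s),\alpha(s)}$: for a.e.\ $s \in [s_D,s'_D]$, the slice integral is bounded below by $\eone_{\star}(k(s))$. Integration in $s$, combined with Lemma~\ref{lem:1D sublead} and the identity $s'_D - s_D = |\dd D \cap \dd \Om|$ provided by~\eqref{eq:rectangular}, yields the claimed lower bound. The matching upper bound comes from a trial state: on $\annt$, set $\trial(\rv) = \fk(t)\, e^{-i \alk s/\eps}$ with $k = k(s)$, smoothly cut off near $t = c_0 |\log\eps|$ to extend by zero in the bulk, paired with a magnetic potential $\atrial$ chosen as in \cite[Section~4]{CR2}. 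Computing $\glfe[\trial,\atrial]$ restricted to $D$ yields the matching upper bound after another application of Lemma~\ref{lem:1D sublead}.

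The main obstacle is precision: since the correction $\eps\, k(s) \fc_{\alO}[f_0]$ is of order $O(1)$ after integration in $s$ while the leading term is of order $O(\eps^{-1})$, both bounds must match up to $O(\eps^{1/2} |\log\eps|^{\gamma})$. This requires the refined pointwise and integral estimates of $\glm$ and $\aavm$ from~\cite{CR1,CR2} to absorb the cross terms between tangential and normal kinetic energy and the Jacobian expansion. It also requires verifying that replacing $\alk, \fk$ by $\alO, f_0$ in the curvature correction costs only $O(\eps^{1/2} |\log\eps|^{\gamma})$, a fact already used in the proof of Lemma~\ref{lem:1D sublead}.
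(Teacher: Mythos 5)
Your overall scheme (localize to the boundary layer, pass to boundary coordinates, compare with the 1D functional, invoke Lemma~\ref{lem:1D sublead}) matches the paper's starting point, but both of your core steps contain genuine gaps.

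For the lower bound, ``slicing in $s$ and using the variational principle for $\fone_{k(s),\alpha(s)}$'' does not work as stated. A slice of $\psi$ at fixed $s$ is indeed a function of $t$, but its energy is not of the form $\fone_{k,\alpha}$: the confining potential $\pot(t)f^2$ in~\eqref{eq:1D func} is generated by the \emph{tangential} kinetic term $(1-\eps k t)^{-2}\lf|(\eps\partial_s + i\aae)\psi\ri|^2$ evaluated on the special ansatz $f(t)e^{-i\alpha s/\eps}$. If you discard that term in order to decouple the slices, you lose the potential and the slice infimum lies far below $\eone_\star(k(s))$; if you keep it, you must control the cross term $-2\eps b_n(t)J_s[u]$ arising from the substitution $\psi = u_n f_n e^{-i\alpha_n s/\eps}$ on cells of tangential side $O(\eps)$. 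Controlling that current term --- reducing it to cell-boundary integrals by Stokes and proving $\sum_n \E_n[u_n]\geq -C\eps^{3/2}|\log\eps|^{\gamma}$ --- is the heart of the proof, and the localized setting creates a new difficulty your plan does not address: the cells no longer close up periodically, so the boundary terms at $s=s_D$ and $s=s'_D$ do not cancel pairwise and must be estimated separately (the paper does this with a cut-off $\chi$ supported in the first and last cells and two further integrations by parts).

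For the upper bound, computing $\glfe[\trial,\atrial]$ ``restricted to $D$'' is not a valid bound on $\int_D \gled$: the variational principle controls only the \emph{global} minimum $\glee=\int_\Om\gled$, and there is no local variational principle for the energy density of the actual minimizer, which could a priori concentrate excess energy in $D$ and compensate in $D^c$. The paper instead writes $\int_D\gled=\int_\Om\gled-\int_{D^c}\gled$, and combines the known global asymptotics of~\cite[Theorem 1]{CR2} (this is where a trial state legitimately enters) with the \emph{lower} bound applied to $D^c$, which also satisfies~\eqref{eq:rectangular}. Your closing remarks on precision are sensible but generic; the two mechanisms above are what actually make the $O(\eps^{1/2}|\log\eps|^{\gamma})$ remainder achievable.
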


\begin{proof}
This is an adaptation of the method developed in~\cite{CR1,CR2}. First we recall from~\cite[Lemma 4]{CR2} the energy lower bound
\beq
			\label{eq:energy lb ann}
			\glee \geq \frac{1}{\eps} \annf[\psi] -C \eps^2 |\log\eps|^2,
		\eeq
where $\psi$ is, up to a phase factor, the GL order parameter in boundary coordinates and the reduced functional is 
\begin{multline}\label{eq:GL func bound} 
	\annf[\psi] : = \int_0^{|\partial \Omega|} \diff s \int_0^{c_0 |\log\eps|} \diff t \lf(1 - \eps \curv t \ri) \lf\{ \lf| \partial_t \psi \ri|^2 + \frac{1}{(1- \eps \curv t)^2} \lf| \lf( \eps \partial_s + i \aae(s,t) \ri) \psi \ri|^2 \ri.	\\	
	\lf. - \frac{1}{2 \hex} \lf[ 2|\psi|^2 - |\psi|^4 \ri]  \ri\}
\end{multline}
with 
\beq\label{eq:vect pot bound}
	\aae(s,t) : =- t + \half \eps \curv t^2 + \eps \deps , 	
\eeq
and
\beq\label{eq:deps}
	\deps : = \frac{\gamma_0}{\eps^2} - \lf\lfloor \frac{\gamma_0}{\eps^2} \ri\rfloor,	\qquad	 \gamma_0 : = \frac{1}{|\partial \Omega|} \int_{\Omega} \diff \rv \: \curl \, \aavm,
\eeq
$ \lf\lfloor \: \cdot \: \ri\rfloor $ standing for the integer part.

This lower bound may in fact (with identical proof) be localized, yielding 
\begin{multline}\label{eq:low bound loc}
\eps \int_{D} \diff \rv \: \gled \geq \int_{s_D}^{s'_D} \diff s \int_0^{c_0 |\log\eps|} \diff t \lf(1 - \eps \curv t \ri) \lf\{ \lf| \partial_t \psi \ri|^2 \ri.	\\
	\lf. + \frac{1}{(1- \eps \curv t)^2} \lf| \lf( \eps \partial_s + i \aae(s,t) \ri) \psi \ri|^2 - \frac{1}{2 \hex} \lf[ 2|\psi|^2 - |\psi|^4 \ri]  \ri\} -C \eps^3|\log\eps|^2. 
\end{multline}
We next split the interval $[s_D,s'_D]$ into $N_\eps = O (\eps ^{-1})$ sub-intervals $[s_n,s_{n+1}]$, $n=1, \ldots, N_\eps$, of side length $O(\eps)$. The convention is that $s_D = s_1$ and $s'_D = s_{N_\eps +1}$. This gives a decomposition of $[s_D,s'_D] \times [0,c_0 |\log \eps|]$ into $N_\eps$ rectangular cells $C_n$, $n=1, \ldots, N_\eps$.

Arguing as in~\cite[Lemma 6]{CR2} we then deduce
\begin{equation}\label{eq:reduc func}
	 \eps \int_{D} \diff \rv \: \gled  \geq \int_{s_D}^{s'_D} \diff s \: \eones(k(s)) + \sum_{n=1} ^{\neps} \E_n[u_n] - C\eps^2 \logi
	 \end{equation}
where, within the $n$-th cell, 
\beq
			\label{eq:splitting psi}
			\psi(s,t) = : u_n(s,t) f_n(t) \exp \lf\{-i \tx\left(\frac{\alj}{\eps} + \deps\right)s \ri\},
		\eeq
		and the reduced functionals $\E_n$ are defined as 
		\bml{
			\label{eq:Ej}
			\E_n [u] : =  \int_{\cellj} \diff s \diff t \lf(1 - \eps \kj t \ri) f_n^2 \lf\{ \lf| \partial_t u \ri|^2 + \tx\frac{1}{(1- \eps \kj t)^2} \lf| \eps \partial_s u \ri|^2 - 2 \eps b_n(t) J_s[u] \ri.	\\
			\lf. + \tx\frac{1}{2 \hex} f_n^2 \lf(1 - |u|^2 \ri)^2  \ri\},	
		}
		with
		\beq
			b_n(t) : = \frac{t + \alj - \half \eps \kj t^2}{(1 - \eps \kj t)^2},
		\eeq
		and
		\beq
			J_s[u] : = (i u, \partial_s u) = \tx\frac{i}{2} \lf(u^* \partial_s u - u \partial_s u^*\ri).
		\eeq
		The mean curvature in the $n$-th cell is denoted $k_n$, with $f_n$ and $\alj$ the minimizing profile and phase for the associated functional~\eqref{eq:1D func}.
Inserting the result of Lemma~\ref{lem:1D sublead} into~\eqref{eq:reduc func} we find 
\bml{\label{eq:reduc bis}
	  \int_{D} \diff \rv \: \gled  \geq \eps ^{-1} \eoneo |\dd \Om \cap \dd D| -  \fc_{\alO} [f_0] \int_{\dd D\cap \dd \Om} \diff s \: k(s) 	\\
	  + \eps ^{-1} \sum_{n=1} ^{\neps} \E_n[u_n] - C\eps \logi.
}

We next adapt the strategy of~\cite[Proof of Lemma 7]{CR2} to estimate the reduced functionals from below: 
\begin{equation}\label{eq:low bound reduc loc}
  \sum_{n=1} ^{\neps} \E_n[u_n] \geq - C \eps ^{3/2} |\log \eps| ^{\gamma}, 
\end{equation}
thereby concluding the proof of the lower bound corresponding to~\eqref{eq:ener dens estimate}. This is a long procedure that we will not recall in details. We shall emphasize the only point that has to be modified, due to the fact that we now bound from below the energy density in the set $D$ instead of the full GL energy. 

Step 1 of the proof of~\cite[Lemma 7]{CR2} adapts with no modification, leading to
\bml{
 			\label{eq:step 4}
			\E_n[u_n] \geq  \eps \int_0^{\tj} \diff t F_n(t) \lf[ J_t[u_n](s_{n+1}, t) -  J_t[u_n](s_{n}, t) \ri] \\
			 + \de \int_{\cellj} \diff s \diff t \: \lf(1 - \eps \kj t \ri) f_n^2 \lf[ \lf| \partial_t u_n \ri|^2 
			+ \tx\frac{1}{(1- \eps \kj t)^2} \lf| \eps \partial_s u_n \ri|^2 \ri]  + O (\eps ^{\infty}),
		}
for some $\de \sim |\log \eps | ^{-4}$, denoting
$$
J_t[u] : = (i u, \partial_t u) = \tx\frac{i}{2} \lf(u^* \partial_t u - u \partial_t u^*\ri)
$$
and
$$ F_n (t) = 2 \int_0 ^t \diff \eta\: (1 - \eps k_n \eta) b_n (\eta) f_n ^2 (\eta).$$
We then follow Step 2 of the same proof to combine and estimate the boundary terms produced by the use of Stokes' formula in Step 1 (terms on the first line of the above formula). In this procedure it is crucial to sum boundary terms living on the same cell boundary. Since here $s_1 \neq s_{N_\eps +1}$ there is obviously a need for a different estimate of the terms located on the corresponding boundaries, i.e., those of the original set $D$. This is the only point where we depart slightly from the method of~\cite{CR2} and rely on more refined inequalities. 

We proceed as follows (say for the $n=1$ term, located on the boundary corresponding to $s_1 = s_D$): let $\chi$ be a smooth cut-off function depending only on $s$ with 
$$ \chi (s_1) = 1,\quad  |\chi| \leq 1, \quad \supp (\chi) \subset \mathcal{C}_1, \quad |\nabla \chi| \leq C \eps ^{-1}. $$
Since our cells have side-length $O(\eps)$ in the $s$ direction, the last two requirements are obviously compatible. Intergrating by parts in the $s$ variable we get
\begin{align}\label{eq:split bound term}
 \int_0^{\tone} \diff t \: F_1(t)  J_t[u_1](s_{1}, t)  &= \int_0^{\tone} \diff t \: \chi (s_1) F_1(t)  J_t[u_1](s_{1}, t)  \nonumber\\
 &= \int_{\cell_1} \diff s \diff t\: \chi F_1 \dd_s J_t [u_1] + \int_{\cell_1} \diff s \diff t\: F_1 \dd_s \chi J_t [u_1]. 
\end{align}
We drop the subscripts $1$ for shortness. To handle the first term in the above we note that 
$$ \dd_s J_t [u] = \frac{i}{2}\left( \dd_s u \dd_t u ^* - \dd_s u^* \dd_t u \right) + \frac{i}{2}\left( u\dd_s\dd_t u ^* - u ^* \dd_s \dd_t u \right)$$
and hence a further integration by parts in $t$ yields 
$$ \int_{\cell} \diff s \diff t\: \chi F \dd_s J_t [u] = - \frac{i}{2}\int_{\cell} \diff s \diff t\: \chi \dd_t F  (u\dd_s u^* - u^* \dd_s u).$$
Note that the boundary terms vanish by definition of $F$. We also have $|\dd_t F| \leq C |\log \eps| f ^2$ and thus 
\begin{align*}
 \eps \left| \int_{\cell}\diff s \diff t\: \chi F \dd_s J_t [u]\right| &\leq C \eps |\log \eps| \int_{\cell}\diff s \diff t\: f ^2 |u| |\dd_s u|\\
 &\leq C \delta |\log \eps| \int_{\cell}\diff s \diff t\: f ^2 |u| ^2 + C \delta ^{-1} |\log \eps| \int_{\cell}\diff s \diff t\: f ^2 |\eps \dd_s u| ^2\\
 &\leq C \delta \eps |\log \eps|^2  + C \delta^{-1} \eps^2 |\log \eps|^{\gamma} \leq C \eps^{3/2} |\log \eps|^\gamma
\end{align*}
where we use that $f^2 |u|^2 = |\psi| ^2 \leq 1 $ plus the fact that $|\cell| = O (\eps |\log \eps|)$, recall the estimate~\cite[Eq. (6.15)]{CR2} and have chosen $\delta = \eps ^{1/2}|\log \eps|^\gamma$ for the final optimization.

For the second term in~\eqref{eq:split bound term} we write, using essentially the same ingredients (in particular~\cite[Eq. (6.15)]{CR2}),
\begin{align*}
\left|\eps \int_{\cell} \diff s \diff t\:F \dd_s \chi J_t [u] \right| &\leq C \int_{\cell} \diff s \diff t\:f ^2 |u| |\dd_t u |\\
&\leq C\delta \int_{\cell} \diff s \diff t\:f ^2 |u| ^2 + C \delta ^{-1} \int_{\cell} \diff s \diff t\:f ^2 |\dd_t u | ^2 \\
&\leq C \delta \eps |\log \eps| + C \delta ^{-1} \eps ^{2} | \log \eps| ^{\gamma} \leq C \eps ^{3/2} |\log \eps| ^\gamma. 
\end{align*}
Combining the previous estimates, we obtain 
$$ \eps \int_0^{\tone} \diff t \: F(t)  J_t[u](s_{1}, t) = O(\eps ^{3/2} |\log \eps| ^{\gamma})$$
and a similar estimate for the term located on the boundary $s=s_{N_{\eps} + 1}$. Dealing with the other boundary terms as in~\cite{CR2} concludes the proof of~\eqref{eq:low bound reduc loc}. 

At this stage we have the lower bound corresponding to~\eqref{eq:ener dens estimate},
\begin{equation}\label{eq:low bound loc pre}
 \int_{D} \diff \rv \: \gled \geq \eps ^{-1} \eoneo |\dd \Om \cap \dd D| -  \fc_{\alO} [f_0] \int_{\dd D\cap \dd \Om} \diff s \: k(s) + O(\eps |\log \eps| ^{\gamma})   
\end{equation}
and by the same method also
\begin{equation}\label{eq:low bound comp}
 \int_{D ^c} \diff \rv \: \gled \geq \eps ^{-1} \eoneo |\dd \Om \cap \dd D ^c| -  \fc_{\alO} [f_0] \int_{\dd D ^c \cap \dd \Om} \diff s \: k(s) + O(\eps ^{1/2} |\log \eps| ^{\gamma}).  
\end{equation}
On the other hand, combining the energy estimate of~\cite[Theorem 1]{CR2} and Lemma~\ref{lem:1D sublead} we have the global estimate
$$ \int_\Om \diff \rv \: \gled = \eps ^{-1} \eoneo |\dd \Om| -  \fc_{\alO} [f_0] \int_{\dd \Om} \diff s \: k(s) + O(\eps |\log \eps| ^{\gamma}).$$
Combining with~\eqref{eq:low bound comp} we deduce
\bmln{
\int_{D} \diff \rv \: \gled =  \int_\Om \diff \rv \: \gled - \int_{D^c} \diff \rv \: \gled	\\ \leq \eps ^{-1} \eoneo |\dd \Om \cap \dd D| -  \fc_{\alO} [f_0] \int_{\dd D\cap \dd \Om} \diff s \: k(s) + O(\eps ^{1/2} |\log \eps| ^{\gamma}) 
}
which we combine with~\eqref{eq:low bound loc pre} to complete the proof.
\end{proof}

\section{From the Energy Density to the Order Parameter}\label{sec:ord param}

We now conclude the proof of Theorem~\ref{theo:main} by adding the following to Proposition~\ref{pro:ener dens}:

\begin{pro}[\textbf{Energy density versus order parameter}]\label{pro:ord param}\mbox{}\\
Under the assumptions and with the notation of Theorem~\ref{theo:main} and Proposition~\ref{pro:ener dens}, we have 
\begin{equation}\label{eq:ord param}
\int_D \diff \rv \: \gled = - \frac{1}{2\hex\eps ^2} \int_D \diff\rv \: |\glm| ^4 +  o (1). 
\end{equation}
\end{pro}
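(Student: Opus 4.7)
The strategy rests on the Euler--Lagrange equation satisfied by $\glm$, which converts the Dirichlet part of $\gled$ into algebraic combinations of $|\glm|^2$ and $|\glm|^4$. Testing
$$
	-\lf( \nabla + i \tx\frac{\aavm}{\eps^2}\ri)^2 \glm = \tx\frac{1}{\hex \eps^2}\lf(1 - |\glm|^2\ri)\glm
$$
against $\overline{\glm}$, integrating by parts over $D$, and taking real parts expresses the kinetic energy $\int_D |(\nabla + i\aavm/\eps^2)\glm|^2$ in terms of $\int_D |\glm|^2(1-|\glm|^2)$ plus a boundary integral. Plugging into~\eqref{eq:ener dens} and using the elementary identity $|\glm|^2(1-|\glm|^2) - \tx\frac{1}{2}(2|\glm|^2 - |\glm|^4) = -\tx\frac{1}{2}|\glm|^4$ produces
\bmln{
	\int_D \diff \rv \: \gled = - \tx\frac{1}{2\hex\eps^2}\int_D \diff \rv \: |\glm|^4 + \tx\frac{\hex}{\eps^4} \int_D \diff \rv \:|\curl \aavm - 1|^2 	\\
	+ \Re\int_{\dd D} \diff \sigma \: \overline{\glm}\,\nu\cdot\lf( \nabla + i \tx\frac{\aavm}{\eps^2}\ri)\glm,
}
so that~\eqref{eq:ord param} reduces to showing that the last two terms are both $o(1)$.

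The induced-field contribution is controlled via the a priori bound $\int_\Omega |\curl\aavm - 1|^2 = o(\eps^4)$, valid in the regime $1 < b < \theo^{-1}$ and already part of the a priori information on the minimizing pair established in~\cite{CR1,CR2}. This takes care of the second term with room to spare.

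The boundary $\dd D$ naturally decomposes into three geometrically distinct pieces. The piece lying on $\dd \Om$ contributes nothing thanks to the Neumann condition $\nu\cdot(\nabla + i\aavm/\eps^2)\glm = 0$; the piece at distance larger than $c_0\eps|\log\eps|$ from $\dd \Om$ is exponentially small because of the Agmon-type decay of $|\glm|$ away from the boundary~\cite[Proposition~3.3]{CR1}. The only delicate contribution comes from the two lateral arcs of $\dd D \cap \annt$ corresponding, in boundary coordinates $(s,t)$, to the vertical segments $\{s=s_D\}$ and $\{s=s_D'\}$. Along these arcs the outward normal to $\dd D$ is \emph{tangential} to $\dd \Om$ by virtue of~\eqref{eq:rectangular}, so that the integrand involves only the tangential covariant derivative of $\glm$. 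Splitting $\overline{\glm}(\nabla + i\aavm/\eps^2)\glm = \tx\frac{1}{2} \nabla|\glm|^2 + i|\glm|^2(\nabla\phi + \aavm/\eps^2)$ and taking real parts, the surviving contribution reduces to an integral of $\dd_s |\glm|^2$ weighted against uniformly bounded factors. The new pointwise estimate of Lemma~\ref{lem:est gradient s}, combined with the $O(\eps|\log\eps|)$ length of the arc, then yields the desired $o(1)$ bound --- this is essentially the content of Lemma~\ref{lem:ord param 1}.

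The main obstacle is precisely this lateral boundary term: no analogue of the pointwise bound on $\dd_s|\glm|^2$ is available for the normal derivative of $|\glm|^2$, and it is this asymmetry that forces the rectangularity assumption~\eqref{eq:rectangular}. The latter guarantees that along the lateral sides of $D$ only tangential derivatives of the order parameter enter the boundary integral, so that the controllable estimate suffices to close the argument.
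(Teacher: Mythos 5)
Your argument is in substance identical to the paper's: testing the variational equation against $\overline{\glm}$ and integrating by parts over $D$ is exactly the integrated form of the pointwise identity $\tfrac12\Delta|\glm|^2 = |(\nabla+i\aavm/\eps^2)\glm|^2 + \tfrac{1}{\hex\eps^2}|\glm|^2(|\glm|^2-1)$ used in Lemma~\ref{lem:ord param 1}, and your boundary term $\Re\int_{\dd D}\overline{\glm}\,\nu\cdot(\nabla+i\aavm/\eps^2)\glm$ coincides exactly with $\tfrac12\int_{\dd D}\nabla|\glm|^2\cdot\nuv$. The treatment of the magnetic term, the Neumann piece, the far piece, and the identification of the lateral arcs as the delicate contribution all match the paper.

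One step is too quick: you apply Lemma~\ref{lem:est gradient s} along the \emph{whole} lateral arcs of $\dd D\cap\annt$, but that estimate is only proved on the sub-layer $\abt$ where $f_0(\tau/\eps)\geq\eps^{1/6}$, i.e.\ for $t\lesssim\sqrt{|\log\eps|}$, which is strictly smaller than the full layer $t\leq c_0|\log\eps|$. On the remaining portion $\LL_D^t\cap\abt^c$ the order parameter is small only like $e^{-c\sqrt{|\log\eps|}}$ --- not $O(\eps^{\infty})$, so it cannot be absorbed into your ``exponentially small far piece'' --- and one needs the separate argument of the paper: the crude bound $|\partial_s|\psi|^2|\leq C\eps^{-1}\|\psi\|_{L^{\infty}(\ab^c)}$ combined with the Agmon decay and the lower bound $t_>\gtrsim\sqrt{|\log\eps|/3}$, which makes this contribution $o(1)$ only after checking that $e^{-At_>}\ll|\log\eps|^{-1}$. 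This is a fillable but genuinely needed step.
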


The proof is split in two lemmas. First we have a general result which does not require the set under consideration to be rectangular:

	\begin{lem}[\textbf{Reduction to a boundary term}]\label{lem:ord param 1}\mbox{}\\
		Let $S\subset \Om$ be a measurable set. Then, with the previous notation
		\begin{equation}\label{eq:ord param 1}
		\int_S \diff \rv \: \gled  + \frac{1}{2\hex\eps ^2} \int_S \diff \rv \: |\glm| ^4  = \frac{1}{2}\int_{\dd S } \diff \sigma \: \nabla |\glm| ^2 \cdot \nuv +  o (1) 
		\end{equation}
		with $\nuv$ the outward-pointing normal to $\dd S$.
	\end{lem}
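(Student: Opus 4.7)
The identity is essentially a manipulation of the first Ginzburg--Landau variational equation. Taking the functional derivative of $\glfe$ with respect to $\overline{\glm}$ yields
\begin{equation*}
-\bigl(\nabla + i\aavm/\eps^2\bigr)^2\glm = \frac{1}{\hex\eps^2}\bigl(1-|\glm|^2\bigr)\glm \quad \text{in } \Om,
\end{equation*}
together with the Neumann-type boundary condition $\nuv\cdot(\nabla + i\aavm/\eps^2)\glm = 0$ on $\dd\Om$. I would multiply by $\overline{\glm}$, integrate on $S$, integrate by parts, and take real parts. The purely magnetic part of the resulting boundary integrand drops out because $\mathrm{Re}\bigl[i|\glm|^2\aavm\cdot\nuv\bigr]=0$, so that $\mathrm{Re}\bigl[\overline{\glm}(\nabla+i\aavm/\eps^2)\glm\cdot\nuv\bigr]=\tfrac12\nuv\cdot\nabla|\glm|^2$. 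One obtains the key energy identity
\begin{equation*}
\int_S\bigl|(\nabla+i\aavm/\eps^2)\glm\bigr|^2 = \frac{1}{\hex\eps^2}\int_S (1-|\glm|^2)|\glm|^2 + \frac{1}{2}\int_{\dd S}\nuv\cdot\nabla|\glm|^2\,\diff\sigma.
\end{equation*}

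Substituting this into the definition~\eqref{eq:ener dens} of $\int_S\gled$, the quadratic and quartic bulk contributions combine algebraically: $\tfrac{1}{\hex\eps^2}(|\glm|^2-|\glm|^4)-\tfrac{1}{\hex\eps^2}|\glm|^2+\tfrac{1}{2\hex\eps^2}|\glm|^4=-\tfrac{1}{2\hex\eps^2}|\glm|^4$, so that
\begin{equation*}
\int_S\gled + \frac{1}{2\hex\eps^2}\int_S|\glm|^4 = \frac{1}{2}\int_{\dd S}\nuv\cdot\nabla|\glm|^2\,\diff\sigma + \frac{\hex}{\eps^4}\int_S|\curl\aavm-1|^2.
\end{equation*}
Comparing with \eqref{eq:ord param 1}, it remains to show that the last, magnetic-field, contribution is $o(1)$ as $\eps\to 0$, uniformly in the measurable set $S\subset\Om$.

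This is the main obstacle of the proof: the $\eps^{-4}$ prefactor demands a refined a-priori estimate $\|\curl\aavm-1\|_{L^2(\Om)}^2 = o(\eps^4)$, which does \emph{not} follow from the crude inequality $\glee = O(\eps^{-1})$ obtained by energy comparison. The natural route is to exploit the second Ginzburg--Landau variational equation
\begin{equation*}
-\nabla^\perp(\curl\aavm-1) = \tfrac{\eps^2}{2}\bigl(i\glm,(\nabla+i\aavm/\eps^2)\glm\bigr),
\end{equation*}
in combination with $|\glm|\leq 1$, the exponential localization of $\glm$ on a boundary layer of thickness $O(\eps|\log\eps|)$ along $\dd\Om$, and standard elliptic regularity applied to the scalar field $\curl\aavm-1$. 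Precisely this kind of quantitative control of the induced magnetic field in the regime $1<b<\theo^{-1}$ has already been developed by the authors in~\cite{CR1,CR2}, and I would invoke it as a black box to conclude the $o(1)$ bound and hence~\eqref{eq:ord param 1}.
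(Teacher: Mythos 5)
Your proposal is correct and follows essentially the same route as the paper: your key integrated identity is precisely the paper's pointwise identity $\tfrac12\Delta|\glm|^2 = \bigl|(\nabla+i\aavm/\eps^2)\glm\bigr|^2 + \tfrac{1}{\hex\eps^2}|\glm|^2(|\glm|^2-1)$ after an application of the divergence theorem, and the algebra combining it with the definition of $\gled$ is identical. For the magnetic term the paper makes your black box explicit: the elliptic estimate $\lf\Vert \curl\,\aavm - 1 \ri\Vert_{C^1(\Om)} = O(\eps)$ combined with $\curl\,\aavm = 1$ on $\dd\Om$ gives $|\curl\,\aavm - 1| = O(\eps^2|\log\eps|)$ throughout the boundary layer of width $O(\eps|\log\eps|)$, whence $\eps^{-4}\int_S|\curl\,\aavm-1|^2 = O(\eps|\log\eps|^3) = o(1)$, the contribution away from the boundary being controlled by Agmon decay.
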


\begin{proof}
We first note that 
\begin{equation}\label{eq:neglect mag}
 \frac{1}{\eps^4} \int_S \diff \rv \: \lf| \curl \aavm - 1 \ri|^2 = O(\eps |\log \eps| ^3).  
\end{equation}
Indeed, using the elliptic estimate (see~\cite[Eq. (11.51)]{FH-book}) 
$$ \left\Vert \curl \aavm - 1 \right\Vert_{C^1 (\Om)} = O(\eps) $$
and the fact that $\curl \aavm = 1$ on $\dd \Om$, we deduce that in the full boundary layer~\eqref{eq:intro ann} we have 
$$ \left| \curl \aavm - 1 \right| = O( \eps ^2 |\log \eps|)$$
and thus 
$$ \int_{S\cap \annt} \diff \rv \: \lf| \curl \aavm - 1 \ri|^2 = O(\eps ^5 |\log \eps| ^3).$$
The part of the integral located in $S \cap \annt ^c$ is of much lower order, as follows from the usual Agmon estimates, for instance~\cite[Eq. (12.10)]{FH-book}, and we deduce~\eqref{eq:neglect mag}.  

At the level of precision we aim at we may thus neglect the magnetic kinetic energy:
$$ \int_S \diff \rv \: \gled = \int_{S} \diff \rv \: \bigg\{ \bigg| \bigg( \nabla + i \frac{\aavm}{\eps^2} \bigg) \glm \bigg|^2 - \frac{1}{2 \hex \eps^2} \lf( 2|\glm|^2 - |\glm|^4 \ri) \bigg\} + O (\eps |\log \eps| ^3).$$
Next we recall that since $\glm$ is a minimizer for $\glfe$ we have the first Ginzburg-Landau variational equation:
\beq
	\label{eq:GL var eq}
	 -\left( \nabla + i \frac{\aavm}{\eps ^2} \right) ^2 \glm + \frac{1}{\hex \eps ^2} \glm \left( |\glm| ^2 - 1 \right) = 0.
\eeq
Combining with the identity
$$ \tx\frac{1}{2} \Delta |\glm|^2 = \Re (\overline{\glm} \Delta \glm) + |\nabla \glm| ^2$$
we deduce 
\beq 
	\label{eq:Gl var ineq}
	\frac{1}{2}\Delta |\glm| ^2 = \bigg| \bigg( \nabla + i \frac{\aavm}{\eps^2} \bigg) \glm \bigg|^2 + \frac{1}{\hex \eps ^2} |\glm| ^2 \left(|\glm| ^2 -1 \right).
\eeq
Integrating over $S$ we obtain 
\begin{equation}\label{eq:ener ord par}
 \int_S \diff \rv \: \gled = - \frac{1}{2\hex \eps ^2} \int_S |\glm | ^4 + \frac{1}{2} \int_{S} \diff \rv \: \Delta |\glm| ^2 
\end{equation}
and the proof is complete since of course
$$ \int_{S} \diff \rv \: \Delta |\glm| ^2 = \int_{\dd S } \diff \sigma \: \nabla |\glm| ^2 \cdot \nuv.$$
\end{proof}

Applying the previous lemma with $D = S$, our task should now be to estimate the boundary term in the right-hand side of~\eqref{eq:ord param 1}. It is similar to terms showing up in~\cite[Proof of Lemma 6.1]{FK}, but using the estimates therein shows at best that it is of order $O(1)$, a remainder that we cannot afford. A technical novelty in the present paper is thus to show that this term is in fact $o(1)$, provided the set $D$ is rectangular in boundary coordinates. 

We certainly have 
$$ \int_{\dd D } \diff \sigma \: \nabla |\glm| ^2 \cdot \nuv = \int_{(\dd D )\cap \ann} \diff \sigma \: \nabla |\glm| ^2 \cdot \nuv + O (\eps ^{\infty})$$
by the usual decay estimates, where we recall that $\ann$ is defined in~\eqref{eq:intro def ann rescale}. Splitting the curve $\LL_D := \Phi (\dd D \cap \ann)$ (which is a rectangle in boundary coordinates) in a part $\LL_D^s$ parallel to the boundary of $\dd \Om$ and a part $\LL_D^t$ normal to the boundary of $\dd \Om$ we have
\beq
	\label{eq:ord param 2} \int_{\dd D } \diff \sigma \: \nabla |\glm| ^2 \cdot \nuv = \int_{\LL_D ^t} \diff \sigma \: \nabla |\glm| ^2 \cdot \nuv + O(\eps ^{\infty}).
\eeq
Indeed, on the part of $\LL_D ^s$ which coincides with $\dd \Om$ we have 
$$\nabla |\glm| \cdot \nuv = 0$$
by taking the real part of the Neumann boundary condition
$$ \left(\nabla \glm + i \frac{\aavm}{\eps^2} \right)\cdot \nuv = 0,	\qquad		\mbox{on } \dd \Om$$
satisfied by $\glm$. The other part of $\LL_D ^s$ is deep in the region where the order parameter decays exponentially and may thus be neglected. The new key ingredient is that on $\LL_D ^t$ we can prove 
\begin{equation}\label{eq:grad s moral}
 \lf| \nabla |\glm| ^2 \cdot \nuv \ri| = \lf| \dd_s |\glm| ^2 \ri| \leq C \eps ^{a-1} 
\end{equation}
for some $a>0$. This is natural in view of the results of~\cite{CR1,CR2}: the variations in the $s$-direction should be much smaller than those in the $t$-direction, which happen on a scale~$\eps$. Combining this with the fact that the length of this part of the boundary is $O(\eps |\log \eps|)$ we get the desired estimate. 

We are in fact not able to prove~\eqref{eq:grad s moral} in all the boundary region $ \ann $ and will thus have to split the line integral on $ \LL^t_D $ into two pieces. Let us introduce the following subset of $ \ann $ where the estimate \eqref{eq:grad s moral} is proven in the next Lemma \ref{lem:est gradient s}:
\beq
	\label{eq:abt}
	\abt : = \lf\{ \rv \in \Omega \: \big| \: f_0(\tau/\eps) \geq \eps^{1/6} \ri\}.
\eeq
Any power of $ \eps $ strictly smaller than $ 1/4 $ would do the job but we fix it equal to $ 1/6 $ for concreteness. Recall that $ \tau = \dist(\rv, \partial \Omega) $. As before we denote by $ \ab  = \Phi(\abt) $, the set $ \abt $ in boundary coordinates and it is easy to see that
\beq
	\label{eq:ab}
	\ab = \lf[0,\partial \Omega\ri] \times \lf[0, t_>\ri],	\qquad		t_> \gg 1.
\eeq
In fact, exploiting the available pointwise bounds on $ \fO $ (see for example~\cite[Eq. (A.6)]{CR2}), one immediately verifies that
\beq
	\label{eq:tlarger}
	t_> \geq \tx\frac{1}{\sqrt{3}} \sqrt{|\log\eps|} (1 + o(1)).
\eeq
Before stating the pointwise estimate mentioned above, let us stress that a similar bound cannot hold for the normal component of the gradient of $ \glm $: the estimate $ \dd_t |\glm| ^2 \propto \eps ^{-1} $ is optimal. The different behavior of the $s$ and $t$ derivatives will be apparent in the proof of the following Lemma. 

	\begin{lem}[\textbf{Estimate of the tangential derivative}]
		\label{lem:est gradient s}
		\mbox{}	\\
		As $ \eps \to 0 $
		\beq
			\label{eq:est gradient s}
			\lf\| \partial_s \lf| \glm(\Phi(s,\tau)) \ri|^2 \ri\|_{L^{\infty}(\ab)} = O(\eps^{-5/6} |\log\eps|^\infty).
		\eeq
	\end{lem}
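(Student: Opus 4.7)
The strategy is to exploit that the reference profile $f_0^2(\tau/\eps)$ depends only on $\tau$, so that
\[ \partial_s |\glm|^2 = \partial_s g, \qquad g(s,\tau) := |\glm(\Phi(s,\tau))|^2 - f_0(\tau/\eps)^2, \]
and to apply a standard interior gradient estimate to the auxiliary function $g$. On $\abt$ the reference profile $f_0^2$ is, by construction, of order at least $\eps^{1/3}$, which is the scale on which we expect $g$ itself to be controlled pointwise; this matching is precisely what lets the interior estimate beat the naive bound $O(\eps^{-1})$ coming from the standard GL pointwise derivative bounds.

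Two ingredients are needed. First, the \emph{pointwise closeness}
\[ \|g\|_{L^{\infty}(\abt)} \leq C\eps^{1/3}\logi. \]
Second, a pointwise Laplacian control
\[ \|\Delta g\|_{L^{\infty}(\abt)} \leq C\eps^{-2}. \]
The latter follows by combining the Ginzburg--Landau identity \eqref{eq:Gl var ineq}, the 1D equation \eqref{eq:var eq fO} satisfied by $f_0$ (and its differentiations), together with the classical a priori bounds $|\glm| \leq 1$ and $|(\nabla + i\aavm/\eps^2)\glm| \leq C\eps^{-1}$ (see \cite{FH-book}). The metric corrections arising in the boundary coordinates $(s,\tau)$ contribute only bounded extra terms on $\abt$.

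Given both ingredients, one concludes as follows. For any $\rv_0 \in \abt$, one may enclose $\rv_0$ in a ball $B_r(\rv_0)$ contained in $\abt$ --- or, if $\rv_0$ is near $\partial\Om$, in its reflection across $\partial\Om$, the extension being legitimate because the GL Neumann condition implies $\nabla|\glm|^2 \cdot \nuv = 0$ on $\partial\Om$ --- for any $r \leq c\eps\sqrt{|\log\eps|}$, thanks to \eqref{eq:tlarger}. The classical interior gradient estimate for Poisson's equation then yields
\[ |\partial_s g(\rv_0)| \leq C\lf(r^{-1}\|g\|_{L^{\infty}(B_r)} + r\|\Delta g\|_{L^{\infty}(B_r)}\ri), \]
and optimizing with $r = \eps^{7/6}$, which is admissible since $\eps^{1/6} \ll \sqrt{|\log\eps|}$, gives $|\partial_s g(\rv_0)| \leq C\eps^{-5/6}\logi$, which is the claim.

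The principal obstacle is the pointwise closeness of $|\glm|^2$ to $f_0^2$. An $L^{2}$-type bound $\int_{\ann}|g|^2 \leq C\eps^{2}\logi$ is readily available from the reduced-functional analysis of \cite{CR2}, but upgrading it to an $L^{\infty}$ estimate of order $\eps^{1/3}$ requires additional work --- most naturally a Moser-type iteration exploiting the Laplacian control, or an appeal to refined pointwise comparison bounds in the spirit of those in \cite{CR2}. The parenthetical remark that any exponent strictly smaller than $1/4$ would work in place of $1/6$ reflects precisely the flexibility in how strong an $L^{\infty}$ closeness we can afford in this upgrade step.
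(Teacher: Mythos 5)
Your route --- subtract the reference profile, set $g=|\glm(\Phi(s,\tau))|^2-\fO^2(\tau/\eps)$, and apply an interior gradient estimate on balls of radius $r$ --- is genuinely different from the paper's, and the optimization arithmetic is consistent. But it hinges entirely on the pointwise closeness $\|g\|_{L^{\infty}(\abt)}\leq C\eps^{1/3}\logi$, which you do not prove and which is not available: the best pointwise comparison in the literature is the one the paper itself imports from \cite[Theorem 2]{CR2}, namely $|1-|u||=O(\eps^{1/4}\logi)$ on $\ab$ for the ratio $u=\psi/(\fO e^{-i\alO s/\eps})$, which translates into $\|g\|_{L^{\infty}(\abt)}=O(\eps^{1/4}\logi)$ and no better. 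With that input your scheme optimizes at $r\sim\eps^{9/8}$ and yields only $O(\eps^{-7/8}\logi)$, not the stated $O(\eps^{-5/6}\logi)$. The remedies you sketch do not obviously close this: a Moser-type iteration started from $\int|g|^2\lesssim\eps^2\logi$ together with $\|\Delta g\|_{\infty}\lesssim\eps^{-2}$ gives only $\|g\|_{\infty}=O(1)$ after optimizing the radius, and the $\eps^{1/4}$ bound is essentially the main theorem of \cite{CR2}, so stronger ``refined comparison bounds'' are not something one can simply invoke. (To be fair, $O(\eps^{-7/8}\logi)$ would still suffice for the only use of the lemma, in the proof of Proposition~\ref{pro:ord param}, since the relevant piece of $\dd D$ has length $O(\eps\sqrt{|\log\eps|})$; but it does not prove the lemma as stated.)

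The reason the paper reaches $\eps^{-5/6}$ with only $\eps^{1/4}$ closeness as input is that it never uses the crude bound $\|\Delta g\|_{\infty}\lesssim\eps^{-2}$. It works with the ratio $u$ rather than the difference $g$: combining the GL equation with the equation \eqref{eq:var eq fO} for $\fO$, the $O(\eps^{-2})$ zeroth-order terms cancel, so that after rescaling $s$ one gets $\|\Delta v\|_{\infty}\lesssim|\log\eps|^3\|\nabla v\|_{\infty}+\eps^{1/12}\logi$, and a self-consistent bound is then closed via the Gagliardo--Nirenberg interpolation $\|\nabla v\|_{\infty}\lesssim\|\Delta v\|_{\infty}^{1/2}\|1-|v|\|_{\infty}^{1/2}+\|1-|v|\|_{\infty}$. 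The extra smallness thus comes from a cancellation at the level of second derivatives, not from a stronger pointwise closeness; to salvage your approach you would have to exhibit the analogous cancellation in $\Delta g$, i.e.\ between $|(\nabla+i\aavm/\eps^2)\glm|^2+(b\eps^2)^{-1}|\glm|^2(|\glm|^2-1)$ in \eqref{eq:Gl var ineq} and the corresponding terms for $\fO$, which is essentially the paper's computation in disguise. Two minor further points: the ball $B_r(\rv_0)$ is not contained in $\abt$ when $\rv_0$ lies within $\eps^{7/6}$ of the inner edge $\tau\approx\eps t_>$ (fixable by slightly enlarging $\abt$, since $\fO$ varies negligibly over that distance), and the even reflection across $\dd\Om$ also requires the Neumann condition for $\fO$ at $t=0$, which does hold.
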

	
	\begin{proof}
		The starting point is the variational equation \eqref{eq:GL var eq} satisfied by $ \glm $: setting as in \cite[Sect. 5]{CR2} 
		\beq
			\label{eq:psi}
			\psi(s,t) = \glm(\Phi(s,\eps t)) e^{-i \phi_{\eps}(s,t)},
		\eeq
		where the explicit expression of the gauge phase is given in \cite[Eq. (5.4)]{CR2}, one gets
		\beq
			- \partial_t^2 \psi + \tx\frac{1}{(1 - \eps k(s) t)^2} \lf(-i \eps \partial_s + \tx\frac{\tilde{A}}{\eps} \ri)^2 \psi = \frac{1}{b} \lf(1 - \lf| \psi \ri|^2 \ri) \psi,
		\eeq
		i.e., thanks to the choice of the gauge the magnetic field is now purely tangential. The explicit expression of $ \tilde{A} $ can be easily recovered in terms of $ \phi_{\eps} $ (see, e.g., \cite[Eq. (5.6)]{CR2}), but the most important point is the estimate 
		\bdm
			\big\| \tilde{A}(s,t) + \eps t \big\|_{L^{\infty}(\ann)} = O(\eps^2|\log\eps|^2),
		\edm
		which follows from a priori estimates on $ \aavm $ as \cite[Eq. (4.23)]{CR1}. In addition the explicit expression \cite[Eq. (5.6)]{CR2} also implies that $ \big\| \partial_s \tilde{A} \big\|_{\infty} = O(\eps|\log\eps|) $. Notice that here we are exploiting the smoothness of $ \partial \Omega $ and the fact that the curvature is infinitely differentiable. Plugging the ansatz 
		\beq
			\label{eq:u}
			\psi(s,t) = f_0(t) e^{-i \frac{\alpha_0}{\eps} s} u(s,t),
		\eeq
		for some unknown function $ u $ and with $ f_0 $ and $ \alpha_0 $ the minimizing  density and phase of the half-plane functional~\eqref{eq:1D func bis}, we get
		\bdm
			- \partial_t^2 \lf( f_0 u \ri) + \tx\frac{f_0}{(1 - \eps k(s) t)^2} \lf( -i \eps \partial_s + \alpha_0 + t + O(\eps|\log\eps|) \ri)^2 u = \frac{1}{b} \lf(1 -  f_0^2 |u|^2 \ri) f_0 u.
		\edm
		Exploiting now the variational equation~\eqref{eq:var eq fO} for $ f_0 $ and dividing by $f_0 > 0 $, we obtain
		\bdm
			- \partial_t^2 u - 2 \tx\frac{f_0^{\prime}}{f_0} \partial_t u - \tx\frac{1}{(1 - \eps k(s) t)} \eps^2 \partial_s^2 	u - 2 i \eps (\alpha_0 + t) \partial_s u = \frac{f_0^2}{b} \lf(1 -  |u|^2 + O(\eps|\log\eps|^2) \ri) u. 
		\edm
		Since (see \cite[Lemma A.1]{CR1})
		\bdm
			\frac{f_0^{\prime}}{f_0} = O(|\log\eps|^3),	
		\edm
		inside $ \ab $ the above equation yields the estimate
		\bmln{
			\lf| \left(\partial_t ^2 + \eps ^2 \partial_s ^2 \right) u \ri| \leq C \lf[ |\log\eps|^3 \lf| \left(\partial_t + \eps\partial_s\right) u \ri| + \lf| 1 - |u| \ri| \lf|u\ri| \ri] \\
			\leq C \lf[ |\log\eps|^3 \lf| (\partial_t,\eps\partial_s) u \ri| + \eps^{1/12} |\log\eps|^{b} \ri] 
		}
		where we have used the upper bound $ |u| \leq f_0^{-1} \leq \eps^{-1/6} $ and the estimate 
		$$ \lf| 1 - |u| \ri| = O(\eps^{1/4} |\log\eps|^{\infty}) \mbox{ in } \ab $$
		which follows from~\cite[Theorem 2]{CR2}. Rescaling now also the $ s $ variable by setting $ \xi = s/\eps $ and denoting $ v(\xi,t) = u(\eps \xi ,t) $, we can apply the Gagliardo-Nirenberg inequality~\cite[p. 125]{N} 
		\bdm
			\lf\| \nabla_{\xi,t} v \ri\|_{\infty} \leq C \lf( \lf\| \Delta_{\xi,t} v \ri\|^{1/2}_{\infty} \lf\| 1 - |v| \ri\|^{1/2}_{\infty} + \lf\| 1 - |v| \ri\|_{\infty} \ri),
		\edm
		which implies $ \lf\| \nabla v \ri\|_{\infty} = O(\eps^{1/6} |\log\eps|^{\infty}) $ and therefore
		\beq
			\lf\| \lf( \partial_t + \eps \partial_s \ri) |u| \ri\|_{L^{\infty}(\ab)} \leq \lf\| \nabla v \ri\|_{\infty} = O(\eps^{1/6} |\log\eps|^{\infty}).
		\eeq 	
		The result on $ \glm $ then follows from the identities~\eqref{eq:psi} and~\eqref{eq:u}.
	\end{proof}

Putting together the results of Lemma~\ref{lem:ord param 1} and Lemma~\ref{lem:est gradient s}, we are now in position to complete the proof of the main result of this Section:

	\begin{proof}[Proof of Proposition \ref{pro:ord param}]
		The estimate \eqref{eq:ord param 1} and the properties of the set $ D $ yield \eqref{eq:ord param 2} and therefore
		\bml{
			\label{eq:boundary int 1}
			\int_S \diff \rv \: \gled  + \frac{1}{2\hex\eps ^2} \int_S \diff \rv \: |\glm| ^4  = \frac{1}{2}\int_{\dd S } \diff \sigma \: \nabla |\glm| ^2 \cdot \nuv +  o (1) \\
			= 	\int_{\LL_D ^t} \diff \sigma \: \nabla |\glm| ^2 \cdot \nuv +  o (1) 
			= \int_{\LL_D ^t \cap \abt} \diff \sigma \: \nabla |\glm| ^2 \cdot \nuv +  o (1),
		}
		since 
		\bmln{
			\int_{\LL_D ^t \cap \abt^c} \diff \sigma \: \nabla |\glm| ^2 \cdot \nuv = 2 \int_{\Phi(\LL_D ^t \cap \abt^c)} \diff \tau \: |\psi(s,\tau/\eps)| \partial_s |\psi(s,\tau/\eps)| \\
			= 2 \int_{\Phi(\LL_D ^t \cap \abt^c)} \diff \tau \: |\psi(s,\tau/\eps)| f_0(\tau/\eps) \partial_s |u(s,\tau/\eps)| \leq C \eps^{-1}  \lf\| \psi \ri\|_{L^{\infty}(\ab^c)} \lf| \Phi(\LL_D ^t \cap \abt^c) \ri|		
		}
		where we have used \cite[Eq. (6.2)]{CR2}. Now the Agmon estimate for $ \psi $ stated, e.g., in \cite[Eq. (5.5)]{CR2} yields
		\bdm
			 \lf\| \psi \ri\|_{L^{\infty}(\ab^c)} \leq C e^{ - A t_> } \leq C \exp\lf\{ - \tx\frac{1}{\sqrt{3}} \sqrt{|\log\eps|} \ri\} \ll |\log\eps|^{-1},
		\edm
		thanks to \eqref{eq:tlarger}. Hence we conclude that
		\bdm
			\int_{\LL_D ^t \cap \abt^c} \diff \sigma \: \nabla |\glm| ^2 \cdot \nuv \leq C |\log\eps|  \lf\| \psi \ri\|_{L^{\infty}(\ab^c)} = o(1),
		\edm
		and \eqref{eq:boundary int 1} is proven. For the rest of the boundary integral it suffices to apply \eqref{eq:est gradient s}:
		\bdm
			\int_{\LL_D ^t \cap \abt} \diff \sigma \: \nabla |\glm| ^2 \cdot \nuv = \int_{\Phi(\LL_D ^t) \cap \ab} \diff \sigma \: \partial_s \lf|\glm(\Phi(s,\tau))\ri|^2 = O(\eps^{1/6} \logi).
		\edm
	\end{proof}

\end{document}